\documentclass[journal]{IEEEtran}

\usepackage{amssymb}
\usepackage{amsmath}
\usepackage{cite}
\usepackage{url}
\usepackage{empheq}
\usepackage{xcolor}
\usepackage{graphicx}
\usepackage{subfigure}
\usepackage{enumitem}
\usepackage{fancyhdr}
\usepackage{mdwmath}	
\usepackage{mdwtab}
\usepackage{caption}
\usepackage{amsthm}
\usepackage{algorithm}
\usepackage{algorithmic}

\newtheorem{lemma}{Lemma}
\newtheorem{remark}{Remark}
\newtheorem{theorem}{Theorem}
\newtheorem{corollary}{Corollary}
\newtheorem{assumption}{Assumption}
\newtheorem{definition}{Definition}
\newtheorem{proposition}{Proposition}

\newcommand{\eqr}[1]{(\ref{#1})}
\newcommand{\fref}[1]{Fig.~\ref{#1}}

\begin{document}
\title{Leaky Coaxial Cable based Generalized Pinching-Antenna Systems with Dual-Port Feeding}
\author{Kaidi~Wang,~\IEEEmembership{Member,~IEEE,}
Zhiguo~Ding,~\IEEEmembership{Fellow,~IEEE,}
and Daniel~K.~C.~So,~\IEEEmembership{Senior~Member,~IEEE}
\thanks{Kaidi~Wang and Daniel~K.~C.~So are with the Department of Electrical and Electronic Engineering, the University of Manchester, Manchester, M1 9BB, UK (email: kaidi.wang@ieee.org; d.so@manchester.ac.uk).}
\thanks{Zhiguo~Ding is with the School of Electrical and Electronic Engineering (EEE), Nanyang Technological University, Singapore 639798 (e-mail: zhiguo.ding@ntu.edu.sg).}}
\maketitle
\begin{abstract}
By leveraging the distributed leakage radiation of leaky coaxial cables (LCXs), the concept of pinching antennas can be generalized from the conventional high-frequency waveguide based architectures to cable based structures in lower-frequency scenarios. This paper investigates an LCX based generalized pinching-antenna system with dual-port feeding. By enabling bidirectional excitation along each cable, the proposed design significantly enhances spatial degrees of freedom. A comprehensive channel model is developed to characterize intra-cable attenuation, bidirectional phase progression, slot based radiation, and wireless propagation. Based on this model, both analog and hybrid beamforming frameworks are studied with the objective of maximizing the minimum achievable data rate. For analog transmission, slot activation, port selection, and power allocation are jointly optimized using matching theory, coalitional games, and bisection based power control. For hybrid transmission, zero-forcing (ZF) digital precoding is incorporated to eliminate inter-user interference, thereby simplifying slot activation and enabling closed-form optimal power allocation. Simulation results demonstrate that dual-port feeding provides notable performance gains over single-port LCX systems and fixed-antenna benchmarks, validating the effectiveness of the proposed beamforming and resource allocation designs under various transmit power levels and cable parameters.
\end{abstract}
\begin{IEEEkeywords}
Generalized pinching-antenna systems, leaky coaxial cable (LCX), dual-port feeding, port selection, slot activation, resource allocation
\end{IEEEkeywords}
\section{Introduction}
The rapid evolution of wireless communication networks has driven a continuous increase in demand for flexible and reconfigurable antenna technologies, particularly in complex propagation environments such as indoor scenarios, transportation systems, and industrial deployments. To adapt wireless channels to dynamic environments, a variety of flexible antenna paradigms have been extensively investigated, including reconfigurable intelligent surfaces (RISs), fluid antennas, and movable antennas \cite{wu2019irs, wong2020fluid, zhu2023modeling}. RISs reconfigure wireless channels by adjusting the electromagnetic responses of passive reflecting elements, thereby reshaping the composite propagation paths between transmitters and receivers \cite{wu2021intelligent}. Fluid antennas exploit conductive fluids to dynamically reconfigure antenna positions within a constrained physical space \cite{wu2024fluid}, while movable antennas enable spatial adaptation through mechanical displacement of radiating elements \cite{ma2024movable}. These technologies have demonstrated significant potential in improving link reliability and spectral efficiency through precise channel adaptation. However, such reconfiguration is generally realized through indirect channel manipulation via cascaded transmitter-surface-receiver links or wavelength-scale element displacement, which limits direct control over the physical radiation location and large-scale propagation geometry.

To meet the need for large-scale spatial reconfigurability, the concept of pinching antennas has recently emerged as a promising solution \cite{ding2024pin}. By attaching controllable perturbation elements, commonly referred to as pinches, onto a guided-wave structure, the electromagnetic energy propagating along the medium can be intentionally leaked at selected locations \cite{liu2025pinching}. As a result, the radiation position becomes dynamically reconfigurable over distances far exceeding the carrier wavelength, enabling substantial modification of wireless channel conditions \cite{kaidi2025pin, xu2025pin}. Such a mechanism allows pinching antennas to reshape spatial radiation patterns, improve coverage performance, and even establish line-of-sight (LoS) links in scenarios with severe blockage \cite{kaidi2025pin4, xie2026pin}. Compared with conventional flexible antenna technologies, pinching antennas provide a fundamentally different reconfiguration paradigm by directly manipulating guided-wave propagation, while retaining attractive features such as low hardware cost, minimal reliance on active radio-frequency (RF) components, and high implementation flexibility \cite{fang2025pin}.
\subsection{Related Works}
Building upon the fundamental concept of pinching antennas, extensive research efforts have been devoted to extending system models and enhancing communication capabilities. From an implementation perspective, a practical multi-waveguide pinching-antenna system was proposed by introducing pre-installed pinching antennas at discrete candidate locations, thereby transforming the conventional continuous antenna placement problem into a discrete antenna activation and resource allocation framework \cite{kaidi2025pin2}. Beyond modeling realism, several studies have further explored waveguide morphology as an important design dimension to enhance deployment flexibility and transmission performance \cite{ouyang2025uplink, zhong2025pin}. In this context, a segmented waveguide enabled pinching-antenna system was proposed to enable physically consistent uplink communications by eliminating inter-antenna re-radiation effects, while simultaneously mitigating in-waveguide attenuation and improving system maintainability through modular waveguide deployment \cite{ouyang2025uplink}. In contrast, a two-dimensional pinching-antenna architecture was proposed to extend the conventional line shaped waveguide to form planar configurations, thereby enabling reconfigurable apertures for enhanced spatial beamforming capability \cite{zhong2025pin}. At the system paradigm level, environment division multiple access (EDMA) was introduced as a novel multiple-access framework enabled by the controllable establishment and blockage of LoS links using pinching antennas, which facilitates spatial user separation and interference suppression beyond conventional beam domain multiplexing \cite{ding2025edma}.

From the signal domain transmission perspective, enhanced transmission architectures have also been investigated. Specifically, a fully connected tri-hybrid beamforming design employed a tunable phase shifter network to interconnect all radio frequency chains with all waveguides, thereby enabling joint digital, analog, and pinching beamforming with increased spatial degrees of freedom \cite{zhao2025tri}. In addition, a multi-mode pinching-antenna system exploited a waveguide supporting the simultaneous propagation of multiple guided modes, enabling mode domain multiplexing for multi-user communications \cite{xu2026pin}. Meanwhile, the feeding architecture of pinching-antenna systems has attracted growing attention, as it fundamentally determines the excitation manner and propagation characteristics of the guided waves. A wireless fed pinching-antenna system was proposed to relax wired deployment constraints and improve installation flexibility \cite{wijewardhana2025pin}, while a center-fed pinching-antenna architecture enabled bidirectional guided-wave propagation within a single waveguide through controllable power splitting, thereby doubling the available spatial degrees of freedom compared to conventional end-fed configurations \cite{gan2026pin}.
\subsection{Motivation and Contributions}
Despite the extensive system level advancements achieved in pinching-antenna research, existing studies have predominantly focused on dielectric waveguide based implementations, which inherently restrict operation to high-frequency bands and limit applicability in sub-6 GHz wireless scenarios. Motivated by this limitation, the extension of the pinching-antenna paradigm to more general guiding structures has been explored, where leaky coaxial cables (LCXs) were identified as a potential physical platform \cite{xu2025generalized}. In \cite{kaidi2025generalized}, an LCX based generalized pinching-antenna framework was developed via slot activation, illustrating the feasibility of flexible and large-scale antenna reconfiguration beyond waveguide based structures. Following this direction, this work incorporates dual-port feeding into LCX based generalized pinching-antenna architectures with slot activated radiation. Although dual-port excitation is common in conventional LCX systems \cite{hou2017lcx, wu2017lcx}, its integration with dynamic slot activation introduces distinct channel characteristics and beamforming challenges due to bidirectional guided-wave propagation and controllable radiation patterns. By exploiting the resulting additional spatial degrees of freedom, this work develops tailored channel models and beamforming designs, demonstrating significant performance gains. The main contributions of this paper are summarized as follows.
\begin{itemize}[leftmargin=*]
\item A dual-port fed LCX based generalized pinching-antenna system is proposed, where bidirectional excitation fundamentally changes the guided-wave propagation and radiation characteristics. Key insights into the resulting channel structure and interference behavior are revealed, which serve as the basis for the subsequent design of analog and hybrid transmission schemes. A corresponding channel model is established to capture the composite effects of bidirectional guided transmission and wireless radiation.
\item Based on the developed dual-port LCX channel model, two max-min optimization problems are formulated to maximize the minimum achievable data rate under the analog and hybrid transmission schemes, respectively. For the analog scheme, slot activation, port selection, and power allocation are jointly considered, whereas for the hybrid scheme, slot activation and power allocation are optimized under digital precoding to ensure user fairness.
\item For the analog transmission scheme, an efficient solution framework is developed to solve the resulting mixed-integer optimization problem through problem decomposition. Port selection is formulated as a perfect matching, within which slot activation is updated via a coalitional game, while power allocation is optimized using a bisection based approach, enabling low-complexity analog transmission design under bidirectional LCX excitation.
\item For the hybrid transmission scheme, zero-forcing (ZF) beamforming is incorporated into the analog design to suppress inter-user interference. As a result, all activated slots can be treated as a single coalition in the corresponding coalitional game formulation. The resulting interference-free transmission structure enables a closed-form solution for power allocation, leading to an efficient hybrid beamforming design under bidirectional LCX excitation.
\item Simulation results are presented to validate the proposed dual-port fed LCX based generalized pinching-antenna architecture and the effectiveness of the developed analog and hybrid transmission schemes. The results demonstrate the advantages of dual-port feeding and confirm that the proposed solutions significantly improve the minimum achievable rate compared with conventional single-port LCX configurations and fixed-antenna benchmark systems.
\end{itemize}
\section{System Model}
\begin{figure}[!t]
\hspace{2mm}\centering{\includegraphics[width=85mm]{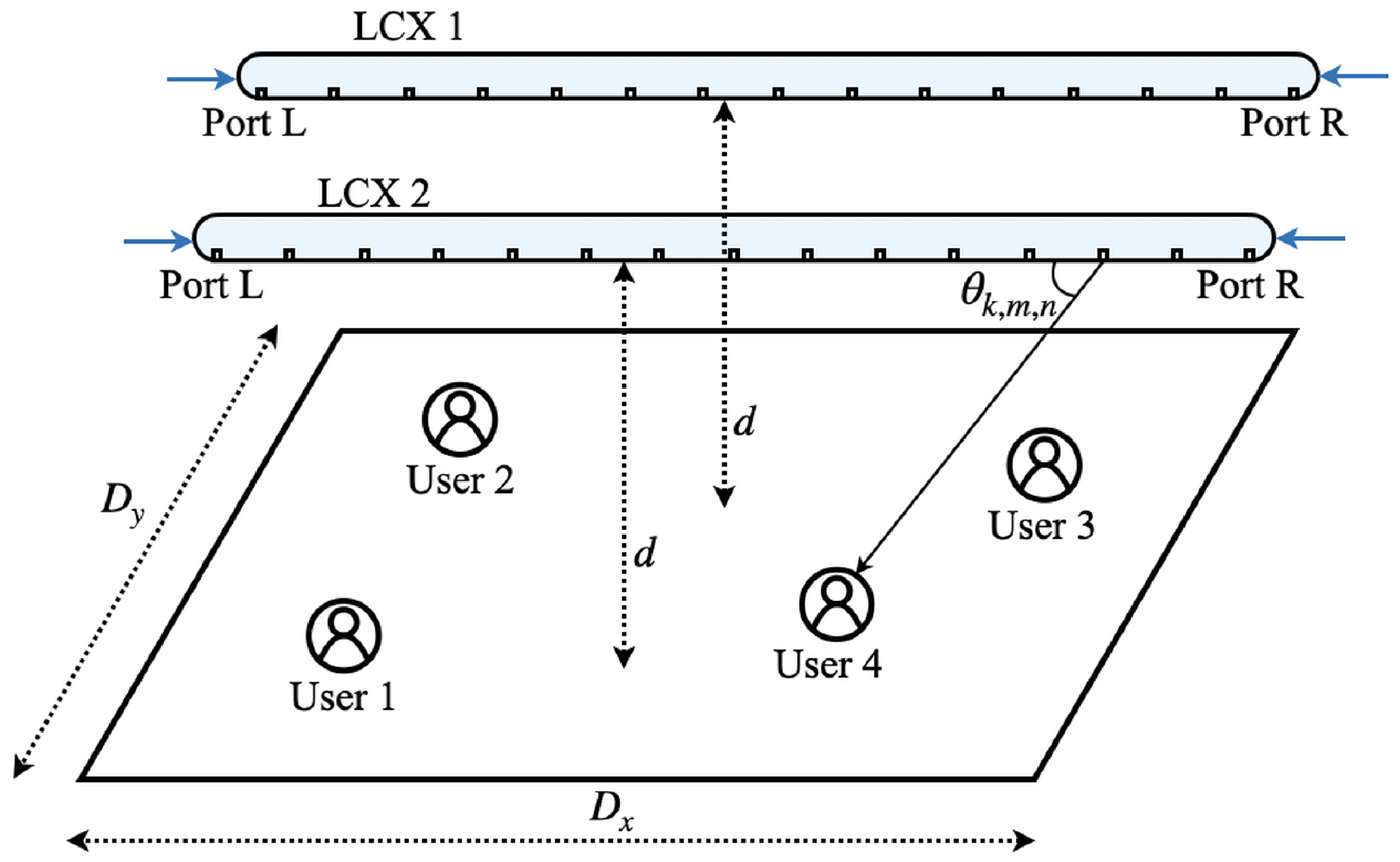}}
\caption{An illustration of the proposed LCX based generalized pinching-antenna system operating in dual-port mode.}
\label{system}
\end{figure}

Consider an LCX based generalized pinching-antenna system, as shown in \fref{system}, consisting of $K$ parallel cables, each equipped with $M$ radiation slots, serving a total of $N$ users, where $N=2K$. All cables are installed at a uniform height $d$ and are oriented along the $x$-axis. The sets of cables and users are denoted by $\mathcal{K}=\{1,2,\dots,K\}$ and $\mathcal{N}=\{1,2,\dots,N\}$, respectively, and the slot index set for cable $k$ is given by $\mathcal{M}_k=\{1,2,\dots,M\}$. In the considered system, users are randomly distributed within a rectangular service region of dimensions $D_x\times D_y$, centered at $(0,0,0)$. The location of user $n$ is denoted by $\boldsymbol{\psi}_n=(x_n,y_n,0)$. The $k$-th cable is located at $y=y_k$, and the position of the $m$-th slot on cable $k$ is given by $\boldsymbol{\psi}_{k,m}^\mathrm{slot}=(x_m,y_k,d)$, where adjacent slots are uniformly spaced such that $x_{m+1}-x_m=\Delta d$.
\subsection{Channel Model}
As shown in \fref{system}, the dual-port operating mode is considered, in which each cable is excited from both ends, referred to as ports $\mathrm{L}$ and $\mathrm{R}$ \cite{hou2017lcx, zhang2022channel}. Owing to the different propagation distances along the cable, the signals injected from the two ports experience distinct attenuation and phase variations\footnote{To enable dual-port excitation in practice, both ports can be driven by a single base station via external feeder cables. Since these feeder cables are low-loss transmission lines, the associated loss is typically negligible compared with the longitudinal attenuation of the LCX and is therefore omitted from the channel model.} \cite{torrance1996lcx}. For cable $k$, the intra-cable channel between port $\mathrm{L}$, located at the left end of the cable, and the $m$-th slot is expressed as
\begin{equation}
h_{k,m}^\mathrm{L}=10^{-\frac{\kappa_r}{20}(m-1)\Delta d}e^{-j\frac{2\pi}{\lambda}\sqrt{\varepsilon_r}(m-1)\Delta d},
\end{equation}
where $\kappa_r$ is the attenuation constant of the coaxial cable in decibels per meter, $\lambda$ is the free-space wavelength, and $\varepsilon_r$ is the relative permittivity of the dielectric material. Similarly, for port $\mathrm{R}$ located at the right end of the cable, the propagation distance to the $m$-th slot is $(M-m)\Delta d$, and the corresponding intra-cable channel is given by
\begin{equation}
h_{k,m}^\mathrm{R}=10^{-\frac{\kappa_r}{20}(M-m)\Delta d}e^{-j\frac{2\pi}{\lambda}\sqrt{\varepsilon_r}(M-m)\Delta d}.
\end{equation}

Based on the above intra-cable channel model, the following remark provides insight into the attenuation reduction achieved by dual-port feeding.
\begin{remark}
Compared with single-port feeding, dual-port feeding can substantially reduce the intra-cable attenuation. Under single-port operation, the maximum propagation distance occurs at the farthest slot and equals $(M-1)\Delta d$. In contrast, with dual-port feeding, each slot is effectively served by its nearest port, such that the maximum propagation distance is reduced to at most $\frac{M-1}{2}\Delta d$. As a result, the worst-case intra-cable attenuation, expressed in decibels, is approximately halved.
\end{remark}

In addition to enhancing the desired signal strength, dual-port feeding can also facilitate multi-user interference mitigation, as summarized in the following remark.
\begin{remark}
When the desired signal for each user is injected through a single selected port, the signals intended for other users from another port typically propagate over a longer distance along the cable before being radiated toward a given user. Consequently, these interfering components experience stronger intra-cable attenuation, which effectively reduces the interference power observed at the users.
\end{remark}

In the considered system, each coaxial cable is installed with the radiation slots oriented downward. As a result, each slot behaves as a magnetic dipole radiator and directs electromagnetic radiation towards the user plane beneath the cable. As reported in \cite{morgan1999lcx}, the downward radiation pattern of each slot follows Lambert's cosine law. When expressed in terms of the elevation angle, this Lambertian characteristic introduces a $\sin(\theta)$ factor, which captures the angular dependence of the radiated energy towards the user plane \cite{yin2024lcx}. Accordingly, the LoS component of the channel between slot $m$ on cable $k$ and user $n$ is given by
\begin{equation}
h_{k,m,n}^\mathrm{LoS}=\eta\frac{e^{-j\frac{2\pi}{\lambda}\left\|\boldsymbol{\psi}_n-\boldsymbol{\psi}_{k,m}^\mathrm{slot}\right\|}}{\left\|\boldsymbol{\psi}_n-\boldsymbol{\psi}_{k,m}^\mathrm{slot}\right\|}\sin\!\left(\theta_{k,m,n}\right),
\end{equation}
where $\eta=\frac{c}{4\pi f_c}$ is the free-space path-loss constant, with $c$ denoting the speed of light and $f_c$ the carrier frequency, $\|\boldsymbol{\psi}_n-\boldsymbol{\psi}_{k,m}^\mathrm{slot}\|$ is the Euclidean distance between slot $m$ on cable $k$ and user $n$, and $\theta_{k,m,n}$ is the elevation angle from slot $m$ on cable $k$ to user $n$. Specifically, due to the vertical separation $d$ between the slot and the user plane, as shown in \fref{system}, the elevation angle satisfies
\begin{equation}
\sin(\theta_{k,m,n})=\frac{d}{\left\|\boldsymbol{\psi}_n-\boldsymbol{\psi}_{k,m}^\mathrm{slot}\right\|}.
\end{equation}

To model the non-line-of-sight (NLoS) component, $L$ scattering points are introduced. The resulting channel between slot $m$ on cable $k$ and user $n$ is expressed as follows:
\begin{equation}
h_{k,m,n}^\mathrm{NLoS}\!=\!\eta\!\sum_{\ell=1}^{L}\!\frac{\delta_\ell e^{-j\frac{2\pi}{\lambda}\left(\left\|\boldsymbol{\psi}_\ell^\mathrm{scat}\!-\boldsymbol{\psi}_{k,m}^\mathrm{slot}\right\|+\left\|\boldsymbol{\psi}_n\!-\boldsymbol{\psi}_\ell^\mathrm{scat}\right\|\right)}}{\left\|\boldsymbol{\psi}_\ell^\mathrm{scat}-\boldsymbol{\psi}_{k,m}^\mathrm{slot}\right\|\left\|\boldsymbol{\psi}_n-\boldsymbol{\psi}_\ell^\mathrm{scat}\right\|}\!\sin\!\left(\theta_{k,m,\ell}\right),
\end{equation}
where $\delta_\ell\sim\mathcal{CN}(0,1)$ is the complex gain associated with the propagation path via scatterer $\ell$, $\boldsymbol{\psi}_\ell^\mathrm{scat}=(x_\ell,y_\ell,z_\ell)$ is the location of scatterer $\ell$, $\|\boldsymbol{\psi}_\ell^\mathrm{scat}-\boldsymbol{\psi}_{k,m}^\mathrm{slot}\|$ and $\|\boldsymbol{\psi}_n-\boldsymbol{\psi}_\ell^\mathrm{scat}\|$ are the distances from slot $m$ on cable $k$ to scatterer $\ell$ and from scatterer $\ell$ to user $n$, respectively, and $\theta_{k,m,\ell}$ is the elevation angle from slot $m$ on cable $k$ to scatterer $\ell$.

The overall channel from feed port $\mathrm{X}\in\{\mathrm{L},\mathrm{R}\}$ to user $n$ via slot $m$ on cable $k$ is given by
\begin{equation}
h_{k,m,n}^\mathrm{X}=h_{k,m}^\mathrm{X}\!\left(h_{k,m,n}^\mathrm{LoS}+h_{k,m,n}^\mathrm{NLoS}\right).
\end{equation}
Accordingly, the channel vector from the feed port of cable $k$ to user $n$ can be presented as follows:
\begin{equation}
\mathbf{h}_{k,n}^\mathrm{X}=\left[h_{k,1,n}^\mathrm{X},h_{k,2,n}^\mathrm{X},\dots,h_{k,M,n}^\mathrm{X}\right]^\mathsf{T}.
\end{equation}
\subsection{Analog Beamforming via Slot Activation}
Since the signals transmitted from different feed ports along the same cable experience distinct phase variations, each slot can radiate signals with identical amplitudes but different phases. By selectively activating or deactivating these slots\footnote{Slot activation and deactivation can be implemented through various LCX reconfiguration mechanisms, such as the application of removable conductive or absorbing coverings, or mechanically reconfigurable outer conductors and layers that locally expose or shield radiation slots \cite{asplund2016leaky}.}, pinching based analog beamforming can thus be realized through binary radiation control. This mechanism follows the concept of pinching beamforming introduced in \cite{wang2025pa, kaidi2025pin3}, in which spatial beam shaping is achieved without explicit phase shifters. The slot activation vector of cable $k$ is defined as
\begin{equation}
\boldsymbol{\alpha}_k=[\alpha_{k,1}, \alpha_{k,2}, \dots, \alpha_{k,M}]^{\mathsf{T}},
\end{equation}
where $\alpha_{k,m}$ is the activation indicator of slot $m$ on cable $k$, with $\alpha_{k,m}=1$ indicating that the slot is activated and $\alpha_{k,m}=0$ otherwise. Accordingly, the effective channel from feed port $\mathrm{X}\in\{\mathrm{L},\mathrm{R}\}$ on cable $k$ to user $n$ is given by
\begin{equation}\label{effchannel}
h_{k,n}^\mathrm{X}=\frac{1}{\sqrt{N_k}}\boldsymbol{\alpha}_k^\mathsf{T}\mathbf{h}_{k,n}^\mathrm{X}=\frac{1}{\sqrt{N_k}}\sum_{m=1}^{M}\alpha_{k,m} h_{k,m,n}^\mathrm{X},
\end{equation}
where $N_k = \sum_{m=1}^{M} \alpha_{k,m}$ is the number of activated slots on cable $k$. In the considered system, all slots exhibit identical radiation characteristics. The signal launched from a given feed port is equally distributed in power across the activated slots, which leads to the normalization factor $1/\sqrt{N_k}$ to reflect the equal power split of the injected port signal over the $N_k$ activated slots. Nevertheless, the radiated power contributed by each activated slot is generally unequal due to the distance-dependent intra-cable attenuation.

The physical interpretation of the slot activation vector in the proposed LCX based architecture is summarized in the following remark.
\begin{remark}
In the LCX based pinching-antenna system, the slot activation vector serves as an analog beamformer. Activating a slot enables a portion of the guided wave to radiate into free space, with the radiated signal inheriting its amplitude and phase from the intra-cable propagation. Accordingly, the contribution of each slot depends on its relative position to the users and the associated amplitude attenuation and phase variations along the cable and the wireless channel.
\end{remark}

Since the signals transmitted from ports $\mathrm{L}$ and $\mathrm{R}$ experience different phase variations along the cable, coherent combining across the two ports is not feasible under slot activation. In this case, the desired signal of each user is transmitted through only one selected port on a single cable. Therefore, the transmit signal from port $\mathrm{X}\in\{\mathrm{L}, \mathrm{R}\}$ of cable $k$ is given by
\begin{equation}
x_k^\mathrm{X}=\sum_{n=1}^{N}\beta_{k,n}^\mathrm{X}\sqrt{P_t p_n}s_n,
\end{equation}
where $\beta_{k,n}^\mathrm{X}$ is the port selection indicator, $P_t$ is the total available transmit power, $p_n$ is the power allocation coefficient for user $n$, and $s_n$ is the desired information symbol of user $n$. In particular, $\beta_{k,n}^\mathrm{X}=1$ indicates that the signal intended for user $n$ is transmitted through port $\mathrm{X}$ of cable $k$, whereas $\beta_{k,n}^\mathrm{X}=0$ otherwise. Moreover, the transmitted symbols satisfy $\mathbb{E}\!\left[|s_n|^2\right]=1, \forall n\in\mathcal{N}$.

At user $n$, the signals transmitted from all feed ports can be received, as follows:
\begin{align}
y_n^\mathrm{A}&=\sum_{k=1}^{K}\left(h_{k,n}^\mathrm{L} x_k^\mathrm{L}+h_{k,n}^\mathrm{R}x_k^\mathrm{R}\right)+n_0 \\\nonumber
&=\sum_{k=1}^{K}\!\sqrt{\frac{P_t}{N_k}}\!\sum_{i=1}^{N}\left(\beta_{k,i}^\mathrm{L}\boldsymbol{\alpha}_k^\mathsf{T}\mathbf{h}_{k,n}^\mathrm{L}+\beta_{k,i}^\mathrm{R}\boldsymbol{\alpha}_k^\mathsf{T}\mathbf{h}_{k,n}^\mathrm{R}\right)\sqrt{p_i}s_i+n_0,
\end{align}
where $n_0 \sim \mathcal{CN}(0,\sigma^2)$ is the additive white Gaussian noise (AWGN), and $\sigma^2$ is the corresponding noise power.

Since each user is served by only one selected port and coherent combining across different cables or feed ports is not considered, the desired signal and interference powers are evaluated as the sum of the corresponding per-link received powers. Accordingly, the received signal-to-interference-plus-noise ratio (SINR) at user $n$ is given by
\begin{equation}
\gamma_n^\mathrm{A}=\frac{p_n\!\sum_{k=1}^{K}\!\frac{P_t}{N_k}\!\sum_{\mathrm{X}\in\{\mathrm{L},\mathrm{R}\}}\!\beta_{k,n}^\mathrm{X}\!\left|\boldsymbol{\alpha}_k^\mathsf{T}\mathbf{h}_{k,n}^\mathrm{X}\right|^2}{\sum_{i=1,i\neq n}^{N}\!p_i\!\sum_{k=1}^{K}\!\frac{P_t}{N_k}\!\sum_{\mathrm{X}\in\{\mathrm{L},\mathrm{R}\}}\!\beta_{k,i}^\mathrm{X}\!\left|\boldsymbol{\alpha}_k^\mathsf{T}\mathbf{h}_{k,n}^\mathrm{X}\right|^2\!+\!\sigma^2}.
\end{equation}
The corresponding achievable data rate of user $n$ is
\begin{equation}
R_n^\mathrm{A}=\log_2\!\left(1+\gamma_n^\mathrm{A}\right).
\end{equation}
\subsection{Hybrid Beamforming with Digital Precoding}
In the proposed LCX based generalized pinching-antenna system, the two feed ports on each cable can be regarded as independent RF chains. With $K$ cables deployed, the downlink transmission can therefore be modeled as a multiple-input single-output (MISO) system with $2K$ RF chains, in which the information signals of all users are jointly precoded at baseband and simultaneously transmitted through all feed ports on all cables. For user $n$, the corresponding channel vector can be obtained by stacking the effective channels from all feed ports to this user, given by
\begin{equation}\label{channelvec}
\mathbf{h}_n=\left[h_{1,n}^\mathrm{L}\;\;h_{1,n}^\mathrm{R}\;\;h_{2,n}^\mathrm{L}\;\;h_{2,n}^\mathrm{R}\;\;\cdots\;\;h_{K,n}^\mathrm{L}\;\;h_{K,n}^\mathrm{R}\right]\in\mathbb{C}^{1\times 2K}.
\end{equation}
By stacking the channel vectors of all users, the overall channel matrix can be expressed as follows:
\begin{equation}\label{channelmatrix}
\mathbf{H}=\left[\mathbf{h}_1\;\;\mathbf{h}_2\;\;\cdots\;\;\mathbf{h}_N\right]^\mathsf{T}\in\mathbb{C}^{N\times 2K}.
\end{equation}
It is worth noting that the channel matrix inherently incorporates the effects of LCX attenuation, phase progression along the cable, radiation characteristics of the slots, and the slot activation across all cables.

At baseband, ZF digital beamforming can be employed to suppress inter-user interference. The ZF precoding matrix is designed as follows:
\begin{equation}
\widetilde{\mathbf{W}}=\mathbf{H}^\mathsf{H}(\mathbf{H}\mathbf{H}^\mathsf{H})^{-1}=\left[\widetilde{\mathbf{w}}_1\;\;\widetilde{\mathbf{w}}_2\;\;\cdots\;\;\widetilde{\mathbf{w}}_N\right]\in\mathbb{C}^{2K\times N},
\end{equation}
where $\widetilde{\mathbf{w}}_n$ is the unnormalized digital precoding vector for user $n$, given by
\begin{equation}
\widetilde{\mathbf{w}}_n=\left[\tilde{w}_{1,n}^\mathrm{L}\;\;\tilde{w}_{1,n}^\mathrm{R}\;\;\tilde{w}_{2,n}^\mathrm{L}\;\;\tilde{w}_{2,n}^\mathrm{R}\;\;\cdots\;\;\tilde{w}_{K,n}^\mathrm{L}\;\;\tilde{w}_{K,n}^\mathrm{R}\right]^\mathsf{T}\in\mathbb{C}^{2K\times 1}.
\end{equation}
To facilitate the enforcement of the transmit power constraint, each precoding vector is normalized as follows:
\begin{equation}
\mathbf{w}_n=\frac{\widetilde{\mathbf{w}}_n}{\|\widetilde{\mathbf{w}}_n\|},\;\forall n\in\mathcal{N},
\end{equation}
and the resulting normalized ZF precoding matrix is given by
\begin{equation}
\mathbf{W}=\left[\mathbf{w}_1\;\;\mathbf{w}_2\;\;\cdots\;\;\mathbf{w}_N\right]\in\mathbb{C}^{2K\times N}.
\end{equation}
Accordingly, the ZF precoder satisfies $\mathbf{h}_n\mathbf{w}_n=1/\|\widetilde{\mathbf{w}}_n\|,\forall n$ and $\mathbf{h}_n\mathbf{w}_i=0, \forall i\neq n$. 

Based on the ZF precoding matrix, the transmit signal vector across all cable feed ports can be expressed as follows:
\begin{equation}
\mathbf{x}=\sum_{n=1}^{N}\mathbf{w}_n\sqrt{P_t p_n}s_n,
\end{equation}
which satisfies $\mathbb{E}[\|\mathbf{x}\|^2]=P_t\sum_{n=1}^N p_n$. With the ZF based hybrid beamforming architecture, inter-user interference is completely eliminated. Accordingly, the received signal at user $n$ can be written as follows:
\begin{equation}
y_n^\mathrm{H}=\mathbf{h}_n\mathbf{x}+n_0=\sqrt{P_t}\mathbf{h}_n\mathbf{w}_n\sqrt{p_n}s_n+n_0 .
\end{equation}
The SINR experienced by user $n$ is given by
\begin{equation}\label{hbsinr}
\gamma_n^\mathrm{H}=\frac{p_n P_t\left|\mathbf{h}_n \mathbf{w}_n\right|^2}{\sigma^2},
\end{equation}
and the corresponding achievable data rate is expressed as
\begin{equation}
R_n^\mathrm{H}=\log_2\!\left(1+\gamma_n^\mathrm{H}\right).
\end{equation}
\section{Problem Formulation}
The objective of the proposed LCX based pinching-antenna system with dual-port feeding is to ensure user fairness by efficiently exploiting the available spatial degrees of freedom. To this end, the minimum achievable data rate is adopted as the system performance metric \cite{tegos2025pin}. Based on the developed signal models for analog and hybrid beamforming, two minimum rate maximization problems are formulated. Specifically, the first problem considers slot activation, port selection, and power allocation for analog beamforming, whereas the second problem optimizes slot activation and power allocation for hybrid beamforming with digital precoding.

The minimum rate maximization problem for the analog beamforming scheme is formulated as follows:
\begin{subequations}
\begin{empheq}{align}
\max_{\boldsymbol{\alpha},\boldsymbol{\beta},\mathbf{p}}\quad & \min\{R_n^\mathrm{A}| n\in\mathcal{N}\}\\
\textrm{s.t.} \quad & \alpha_{k,m}\in\{0,1\}, \forall m\in\mathcal{M}_k, \forall k\in\mathcal{K},\\
& \sum\nolimits_{m=1}^M\alpha_{k,m}\ge 1, \forall k\in\mathcal{K},\\
& \beta_{k,n}^\mathrm{X}\!\in\{0,1\}, \forall \mathrm{X}\in\{\mathrm{L},\mathrm{R}\}, \forall n\in\mathcal{N}, \forall k\in\mathcal{K},\\
& \sum\nolimits_{k=1}^K(\beta_{k,n}^\mathrm{L}+\beta_{k,n}^\mathrm{R}) = 1, \forall n\in\mathcal{N},\\
& \sum\nolimits_{n=1}^N\beta_{k,n}^\mathrm{X}= 1, \forall \mathrm{X}\in\{\mathrm{L},\mathrm{R}\}, \forall k\in\mathcal{K},\\
& p_n\ge 0, \forall n\in\mathcal{N},\\
& \sum\nolimits_{n=1}^N p_n\le 1,
\end{empheq}
\label{abproblem}
\end{subequations}\vspace{-2mm}\\
where $\boldsymbol{\alpha}$, $\boldsymbol{\beta}$, and $\mathbf{p}$ denote the sets of slot activation indicators, port selection indicators, and power allocation coefficients, respectively. Constraint~(\ref{abproblem}c) ensures that at least one slot is activated on each cable. Constraints~(\ref{abproblem}e) and~(\ref{abproblem}f) enforce a one-to-one association between users and feed ports, i.e., each user is assigned to exactly one port, and each port serves exactly one user.

For the hybrid beamforming architecture with digital precoding, the corresponding minimum rate maximization problem is formulated as follows:
\begin{subequations}
\begin{empheq}{align}
\max_{\boldsymbol{\alpha},\mathbf{p}}\quad & \min\{R_n^\mathrm{H}| n\in\mathcal{N}\}\\
\textrm{s.t.} \quad & \alpha_{k,m} \in\{0,1\}, \forall m \in\mathcal{M}_k, \forall k\in\mathcal{K},\\
& \sum\nolimits_{m=1}^M\alpha_{k,m}\ge 1, \forall k\in\mathcal{K},\\
& p_n\ge 0, \forall n\in\mathcal{N},\\
& \sum\nolimits_{n=1}^Np_n\le 1.
\end{empheq}
\label{hbproblem}
\end{subequations}\vspace{-2mm}\\
In the above problem, constraint~(\ref{hbproblem}c) ensures that at least one slot is activated on each cable, while constraints~(\ref{hbproblem}d) and~(\ref{hbproblem}e) impose the non-negativity and sum-power constraints on the power allocation coefficients, respectively.
\section{Solution for Analog Beamforming}
To facilitate the solution of the mixed-integer optimization problem in \eqr{abproblem}, the original problem is decoupled into two subproblems. Specifically, the first subproblem optimizes the discrete slot activation and port selection variables for a given power allocation, while the second subproblem determines the optimal power allocation under fixed slot activation and port selection. Accordingly, the slot activation and port selection subproblem is presented as follows:
\begin{subequations}
\begin{empheq}{align}
\max_{\boldsymbol{\alpha},\boldsymbol{\beta}}\quad & \min\{R_n^\mathrm{A}| n\in\mathcal{N}\}\\\nonumber
\textrm{s.t.} \quad & \text{(\ref{abproblem}b)}, \text{(\ref{abproblem}c)}, \text{(\ref{abproblem}d)}, \text{(\ref{abproblem}e)}, \text{and (\ref{abproblem}f)},
\end{empheq}
\label{abproblem1}
\end{subequations}\vspace{-2mm}\\
and the power allocation subproblem is given by
\begin{subequations}
\begin{empheq}{align}
\max_{\mathbf{p}}\quad & \min\{R_n^\mathrm{A}| n\in\mathcal{N}\}\\\nonumber
\textrm{s.t.} \quad & \text{(\ref{abproblem}g)}, \text{and (\ref{abproblem}h)}.
\end{empheq}
\label{abproblem2}
\end{subequations}\vspace{-2mm}\\
The corresponding solution methods for the resulting subproblems are introduced in the following subsections.
\subsection{Joint Port Selection and Slot Activation via Matching and Coalitional Games}
Since each user is assigned to exactly one port and each port serves exactly one user, the port selection problem in \eqr{abproblem1} can be modeled as a perfect matching between the user set and the port set \cite{burkard2012assignment}. Specifically, the port set is defined as $\mathcal{P}=\{(k,\mathrm{L}), (k,\mathrm{R})| k\!\in\!\mathcal{K}\}$, which satisfies $|\mathcal{P}|=|\mathcal{N}|=2K$. Based on constraints~(\ref{abproblem}e) and~(\ref{abproblem}f), the user-port association establishes a bijective mapping between $\mathcal{N}$ and $\mathcal{P}$. This mapping can be formally characterized as a perfect matching \cite{lovasz2009matching}, defined as follows.
\begin{definition}
A perfect matching $\mu$ is a mapping from the user set $\mathcal{N}$ to the port set $\mathcal{P}$, i.e., $\mu:\mathcal{N}\rightarrow\mathcal{P}$, which satisfies:
\begin{enumerate}[leftmargin=*]
\item $\mu(n)\in\mathcal{P}$, $\forall n\in\mathcal{N}$, and $\mu^{-1}(k,\mathrm{X})\in\mathcal{N}$, $\forall (k,\mathrm{X})\in\mathcal{P}$;
\item $|\mu(n)|=1$, $\forall n\in\mathcal{N}$, and $|\mu^{-1}(k,\mathrm{X})|=1$, $\forall (k,\mathrm{X})\in\mathcal{P}$;
\item $\mu(n)=(k,\mathrm{X})\Leftrightarrow \mu^{-1}(k,\mathrm{X})=n$.
\end{enumerate}
\end{definition}

Based on the above definition, the perfect matching $\mu$ can uniquely determine port selection indicators. Specifically, the port selection indicator can be expressed as follows:
\begin{equation}
\beta_{k,n}^\mathrm{X}=
\begin{cases}
1, & \text{if } \mu(n)=(k,\mathrm{X}),\\
0, & \text{otherwise}.
\end{cases}
\end{equation}
As a result, the binary constraint in (\ref{abproblem}d) and the assignment constraints in (\ref{abproblem}e) and (\ref{abproblem}f) are guaranteed by the perfect matching structure.

When a user intends to change its assigned port, a direct reassignment is not feasible due to the one-to-one association constraint. Instead, the user must exchange its assigned port with another user occupying the target port. This leads to a swap operation involving two users, defined as follows.
\begin{definition}
Given a matching $\mu$, a swap operation between users $n$ and $n'$ assigned to ports $\mu(n)=(k,\mathrm{X})$ and $\mu(n')=(k',\mathrm{X}')$ results in a new matching $\mu_{n'}^n$ defined by
\begin{equation}
\mu_{n'}^n(n) = (k',\mathrm{X}'),
\end{equation}
and
\begin{equation}
\mu_{n'}^n(n') = (k,\mathrm{X}),
\end{equation}
while the assignments of all other users remain unchanged. The resulting matching $\mu_{n'}^{n}$ is also a perfect matching and therefore preserves the feasibility of the user-port association.
\end{definition}

To facilitate the swap operation, a preference relation over perfect matchings is constructed. In the considered system, the achievable data rate of each user depends on the entire matching through inter-user interference, which introduces externalities among users. As a result, individual user preferences are insufficient to characterize the impact of a swap operation. To align with the system performance objective, the minimum achievable data rate is adopted as the global utility metric. The preference relation over matchings is therefore defined as
\begin{equation}
\mu \prec \mu_{n'}^n \Leftrightarrow  \min\{R_i^\mathrm{A}(\mu)|i\in\mathcal{N}\} < \min\{R_i^\mathrm{A}(\mu_{n'}^n)|i\in\mathcal{N}\}.
\end{equation}
In other words, a swap operation between users $n$ and $n'$ is accepted if and only if the minimum achievable data rate is strictly increased. Otherwise, the swap is rejected and the original matching is retained.

When a user is reassigned to a different port, the cables for transmitting the desired signal and the interference signals may change. Hence, slot activation on the affected cables needs to be updated in order to evaluate the resulting minimum data rate. In this context, the slot activation problem on cable $k$ can be modeled as a coalitional game $(\mathcal{M}_k, v, \mathcal{S}_k)$, where $v$ is the coalition value. In this game, each slot can be activated or deactivated by joining or leaving coalition $\mathcal{S}_k$, respectively. Defining the coalition structure as $\mathcal{S}=\{\mathcal{S}_1, \mathcal{S}_2, \dots, \mathcal{S}_K\}$, the activation of slot $m$ on cable $k$, where $m\notin\mathcal{S}_k$, leads to an updated coalition structure given by
\begin{equation}
\mathcal{S}'=\mathcal{S}\backslash\mathcal{S}_k\cup\{\mathcal{S}_k\cup\{m\}\}.
\end{equation}
Similarly, if slot $m$ on cable $k$ is deactivated, the coalition structure is updated as follows:
\begin{equation}
\mathcal{S}'=\mathcal{S}\backslash\mathcal{S}_k\cup\{\mathcal{S}_k\backslash\{m\}\}.
\end{equation}

Since the activation or deactivation of any slot affects the received signals of all users, the coalition value is defined as the minimum achievable data rate, i.e.,
\begin{equation}
v(\mathcal{S})=\min\{R_i^\mathrm{A}(\mathcal{S})|i\in\mathcal{N}\}.
\end{equation}
Accordingly, a slot update is accepted only if the resulting coalition structure leads to a non-decreasing coalition value, as follows:
\begin{equation}
\mathcal{S}\prec \mathcal{S}'\Leftrightarrow v(\mathcal{S}) <v(\mathcal{S}').
\end{equation}

Based on the proposed matching framework with an embedded coalitional game, a joint port selection and slot activation algorithm is presented in Algorithm~\ref{cgalg}.

\begin{algorithm}[t]
\caption{Port Selection and Slot Activation Algorithm}
\label{cgalg}
\begin{algorithmic}[1]
\STATE \textbf{Initialization:}
\STATE Randomly assign all users to ports to obtain $\mu$.
\STATE Activate the nearest slot for each user to obtain $\mathcal{S}$.
\STATE \textbf{Main Loop:}
\FOR{$n\in\mathcal{N}$, where $R_n^\mathrm{A}=\min\{R_i^\mathrm{A}|i\in\mathcal{N}\}$}
\STATE Find $(k,\mathrm{X})=\mu(n)$.
\FOR{$(k',\mathrm{X}')\in\mathcal{P}$}
\IF{$\mu(n)\neq (k',\mathrm{X}')$}
\STATE Find $n'=\mu^{-1}(k',\mathrm{X}')$.
\STATE Generate $\mu_{n'}^n$.
\FOR{$i\in\{k,k'\}$}
\FOR{$m\in\mathcal{M}_i$}
\IF{$m\in\mathcal{S}_i$}
\STATE $\mathcal{S}'=\mathcal{S}\backslash\mathcal{S}_i\cup\{\mathcal{S}_i\backslash\{m\}\}$.
\ELSE
\STATE $\mathcal{S}'=\mathcal{S}\backslash\mathcal{S}_i\cup\{\mathcal{S}_i\cup\{m\}\}$
\ENDIF
\IF{$\mathcal{S}\prec\mathcal{S}'$}
\STATE $\mathcal{S}=\mathcal{S}'$.
\ENDIF
\ENDFOR
\ENDFOR
\IF{$\mu\prec\mu_{n'}^n$}
\STATE $\mu=\mu_{n'}^n$.
\ENDIF 
\ENDIF 
\ENDFOR
\ENDFOR
\end{algorithmic}
\end{algorithm}

In Algorithm~\ref{cgalg}, the developed power allocation solution can be incorporated into the evaluation of the achievable data rates, which serve as the matching utilities and coalition values. In each iteration, the user with the minimum achievable data rate is selected under equal power allocation, as shown in line~5, thereby identifying the user experiencing the worst channel conditions. For the selected user, all feasible swap operations with other users are examined, and the corresponding slot activation on the affected cables is updated through the coalitional game mechanism, as illustrated in lines~11-22. A swap operation is accepted only if it results in a strict increase in the minimum data rate. The algorithm terminates when no user can further improve the system performance through any swap operation over a complete iteration cycle.

The proposed algorithm adopts a local improvement strategy in which user assignment and slot activation updates are accepted only if they strictly increase the minimum achievable data rate. Since the numbers of feasible user-port matchings and slot activation configurations are finite, and the minimum achievable data rate is upper bounded due to limited transmit power and channel gains, the algorithm cannot admit an infinite sequence of improving updates and is therefore guaranteed to converge in a finite number of iterations.

In the worst case, the user with the minimum achievable data rate needs to examine all possible port exchanges, resulting in $2K-1$ candidate swap operations. For each candidate swap, the slot activation on at most two cables is locally updated. On each affected cable, all $M$ slots are sequentially examined for possible activation or deactivation, leading to $2M$ slot update trials. Therefore, the total number of slot-update trials per iteration is upper bounded by $2(2K-1)M$. Let $T$ denote the total number of outer iterations. By ignoring constant factors, the overall computational complexity of Algorithm~\ref{cgalg} is given by $\mathcal{O}(TKM)$.
\subsection{Bisection based Power Allocation}
In this subsection, the power allocation problem is solved for given port selection and slot activation configurations. The following lemma establishes an intrinsic property of the optimal power allocation that holds under both the analog and hybrid beamforming architectures.
\begin{lemma}\label{lemma1}
Consider a minimum data rate maximization problem as follows:
\begin{subequations}
\begin{empheq}{align}
\max_{\mathbf{p}}\quad & \min\{R_n(\mathbf{p})|n\in\mathcal N\}\\
\textrm{s.t.} \quad & p_n\ge 0, \forall n\in\mathcal{N},\\
& \sum\nolimits_{n=1}^N p_n\le 1.
\end{empheq}
\label{paproblem}
\end{subequations}\vspace{-2mm}\\
There exists an optimal solution $\mathbf{p}^\star$ such that all users achieve the same data rate, i.e.,
\begin{equation}\label{samerate}
R_1(\mathbf{p}^\star)=R_2(\mathbf{p}^\star)=\cdots=R_N(\mathbf{p}^\star).
\end{equation}
\end{lemma}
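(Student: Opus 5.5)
The plan is to take an arbitrary optimal solution and transform it, without lowering the objective, into one where all rates coincide. I will rely on the structure of the SINR expressions in the analog case, $\gamma_n^\mathrm{A}$, and the hybrid case, \eqref{hbsinr}. Both can be written in the common form
\begin{equation*}
R_n(\mathbf{p})=\log_2\!\Big(1+\frac{a_n p_n}{\sum_{i\neq n} b_{n,i} p_i+\sigma^2}\Big),
\end{equation*}
with $a_n>0$ and $b_{n,i}\ge 0$. In the hybrid case $b_{n,i}=0$. These coefficients are determined by the fixed port selection and slot activation. The two properties I will use are:
\begin{itemize}
\item $R_n$ is continuous in $\mathbf{p}$;
\item $R_n$ is strictly increasing in $p_n$ and non-increasing in each $p_i$, $i\neq n$.
\end{itemize}
The feasible set is compact, so an optimal solution exists; let $\mathbf{p}^\star$ be an optimal solution with value $t^\star=\min_n R_n(\mathbf{p}^\star)$.

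\textbf{Step 1: every user can be pushed down to exactly $t^\star$.} Define $\mathcal{U}=\{n: R_n(\mathbf{p}^\star)>t^\star\}$. If $\mathcal{U}$ is empty, \eqref{samerate} already holds. Otherwise, I will reduce the powers of users in $\mathcal{U}$ as follows.

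Consider the map $T$ that sends $p_n$, for $n\in\mathcal{U}$, to the smallest value $q_n\in[0,p_n]$ with $R_n(q_n,\mathbf{p}_{-n})\ge t^\star$. Such a value exists by continuity and monotonicity in $p_n$, and it gives exactly $R_n=t^\star$ when $t^\star>0$.

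Reducing powers never decreases any other user's rate, because interference terms only shrink. So constraints (b) and (c) of \eqref{paproblem} remain satisfied, and every user still has rate at least $t^\star$.

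I would run this as a monotone decreasing iteration, in the style of a standard interference function fixed point (Yates). The sequence of power vectors is componentwise non-increasing and bounded below by zero. Its limit $\hat{\mathbf{p}}$ therefore exists, and by continuity it satisfies $R_n(\hat{\mathbf{p}})\ge t^\star$ for all $n$, with equality for every user whose power was ever reduced.

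\textbf{Step 2: the remaining users (main obstacle).} The hardest part is that a user not in $\mathcal{U}$ initially may end with a rate strictly above $t^\star$ after the others reduce their interference. The reduction must therefore be applied to all users repeatedly, not only to $\mathcal{U}$.

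Handling this needs care with the edge case $t^\star=0$, which is trivial: set $\mathbf{p}=\mathbf{0}$. So assume $t^\star>0$. Then the fixed point $\hat{\mathbf{p}}$ of the map "each user gets the minimal power achieving SINR $2^{t^\star}-1$ given the others" satisfies $R_n(\hat{\mathbf{p}})=t^\star$ for all $n$. This holds because at the limit each coordinate is the minimal power achieving the target, and $a_n>0$ forces the target to be met with equality.

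\textbf{Step 3: conclude.} Since $\hat{\mathbf{p}}\le\mathbf{p}^\star$ componentwise, the sum-power constraint holds. Its minimum rate is $t^\star$, so $\hat{\mathbf{p}}$ is optimal and satisfies \eqref{samerate}.

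As a shortcut in the hybrid (interference-free) case, I would simply note that $p_n=(2^{t}-1)\sigma^2/(P_t|\mathbf{h}_n\mathbf{w}_n|^2)$ gives equal rates directly. Scaling $t$ up to $t^\star$ stays feasible.
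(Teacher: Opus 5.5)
Your proof is correct, but it takes a different route from the paper.

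\textbf{The paper's route.} It argues by contradiction with a local perturbation. Starting from an optimal $\mathbf{p}$ with unequal rates, it shifts a small $\Delta p$ from a user $j$ with $R_j>R_{\min}$ to one bottleneck user $i$. It then invokes continuity to claim that $R_{\min}$ strictly increases.

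\textbf{Your route.} You construct an equal-rate optimum directly. Set $\gamma^\star=2^{t^\star}-1$ and $I_n(\mathbf{p})=\gamma^\star\big(\sum_{i\neq n}b_{n,i}p_i+\sigma^2\big)/a_n$. Feasibility of $\gamma^\star$ at $\mathbf{p}^\star$ is exactly $I(\mathbf{p}^\star)\le\mathbf{p}^\star$. Monotonicity of $I$ makes the iterates non-increasing. Continuity of the affine map $I$ makes the limit $\hat{\mathbf{p}}\le\mathbf{p}^\star$ a fixed point, so all SINRs equal $\gamma^\star$.

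\textbf{What your approach buys.} It proves that an optimum exists (by compactness), which the paper assumes. It is also immune to ties among bottleneck users, which the paper's step does not handle. If users $i$ and $i'$ both attain $R_{\min}$, moving power to $i$ alone leaves $R_{i'}$ unchanged under ZF. In the analog case it can lower $R_{i'}$ when $b_{i',i}>b_{i',j}$. So $R_{\min}$ need not strictly increase, and continuity says nothing about users already at the minimum. Finally, your fixed point is precisely $\mathbf{p}_{\min}(\tau^\star)=\tau^\star(\mathbf{I}-\tau^\star\mathbf{F})^{-1}\mathbf{u}$ from Algorithm~\ref{bisection_power}.

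\textbf{What it costs.} You rely on monotonicity of $R_n$ in the other users' powers and on the explicit SINR form. The paper's sketch invokes only continuity and own-power monotonicity.

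\textbf{Stronger conclusion.} The paper's contradiction argument, starting from an arbitrary optimum, aims to show that every optimum is equal-rate. You can recover this for $t^\star>0$ with one more line. If $\hat{\mathbf{p}}\neq\mathbf{p}^\star$, then $\mathbf{1}^{\mathsf T}\hat{\mathbf{p}}<1$. Scaling $\hat{\mathbf{p}}$ up to full power then strictly raises every SINR because $\sigma^2>0$, a contradiction.

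\textbf{Two small tidy-ups.} First, your Step~1 claim of equality ``for every user whose power was ever reduced'' is not justified as written. It is superseded by Step~2, where you should state explicitly that the limit is a fixed point by continuity of $I$. Second, $a_n>0$ need only be assumed when $t^\star>0$, and there it holds automatically.
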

\begin{IEEEproof}
Let $\mathbf{p}$ be an optimal feasible solution of \eqr{paproblem}, and define $R_{\min}(\mathbf{p})\!=\!\min\{R_n(\mathbf{p})|n\!\in\!\mathcal{N}\}$. If \eqr{samerate} does not hold, there exists a bottleneck user $i\!\in\!\arg\min\{R_n(\mathbf{p})|n\!\in\!\mathcal{N}\}$, and another user $j$ such that $R_j(\mathbf{p})\!>\!R_i(\mathbf{p})$. Since the data rate of each user is strictly increasing with respect to its own power allocation coefficient when the power allocations of the other users are fixed, a small amount of power $\Delta p\!>\!0$ can be transferred from user $j$ to user $i$ while preserving feasibility, resulting in $p_i'=p_i+\Delta p$ and $p_j'=p_j-\Delta p$. 

In this case, $R_i(\mathbf{p}')$ strictly increases, while $R_j(\mathbf{p}')$ decreases. For sufficiently small $\Delta p$, it holds that $R_j(\mathbf{p}')>R_i(\mathbf{p})$. Since $R_n(\mathbf{p})$ is continuous in $\mathbf{p}$ for the analog beamforming scheme, one can choose $\Delta p$ sufficiently small such that $R_n(\mathbf{p}')\ge R_i(\mathbf{p}), \forall n\notin\{i,j\}$. For the hybrid beamforming scheme with zero-forcing precoding, the rates of all remaining users remain unchanged.

As a result, the minimum data rate satisfies $R_{\min}(\mathbf{p}')>R_{\min}(\mathbf{p})$, which contradicts the optimality of $\mathbf{p}$. Therefore, there exists an optimal solution in which all users achieve the same data rate, which completes the proof.
\end{IEEEproof}

Based on Lemma~\ref{lemma1}, the optimal power allocation can be obtained by searching for the maximum common data rate, leading to the following bisection based solution.

With the given $\boldsymbol{\alpha}$ and $\boldsymbol{\beta}$, the link gains for desired signal and interference signal can be respectively represented as follows:
\begin{equation}
a_n=\sum_{k=1}^K\frac{P_t}{N_k}\!\sum_{\mathrm{X}\in\{\mathrm{L},\mathrm{R}\}}\!\beta_{k,n}^\mathrm{X}|\boldsymbol{\alpha}_k^\mathsf{T}\mathbf{h}_{k,n}^\mathrm{X}|^2,
\end{equation}
and
\begin{equation}
b_{n,i}=\sum_{k=1}^K\frac{P_t}{N_k}\!\sum_{\mathrm{X}\in\{\mathrm{L},\mathrm{R}\}}\!\beta_{k,i}^\mathrm{X}|\boldsymbol{\alpha}_k^\mathsf{T}\mathbf{h}_{k,n}^\mathrm{X}|^2, \forall i\neq n.
\end{equation}
Accordingly, the SINR of user $n$ can be rewritten as follows:
\begin{equation}
\gamma_n^\mathrm{A}(\mathbf{p})=\frac{p_na_n}{\sum_{i=1,i\neq n}^Np_ib_{n,i}+\sigma^2}.
\end{equation}
Since $R_n^\mathrm{A}=\log_2(1+\gamma_n^\mathrm{A})$ is monotonically increasing in $\gamma_n^\mathrm{A}$, the power allocation subproblem in \eqr{abproblem2} is equivalent to the following max-min SINR optimization:
\begin{subequations}
\begin{empheq}{align}
\max_{\mathbf{p}}\quad & \min\{\gamma_n^\mathrm{A}(\mathbf{p})| n\in\mathcal{N}\}\\\nonumber
\textrm{s.t.} \quad & \text{(\ref{abproblem}g)}, \text{and (\ref{abproblem}h)}.
\end{empheq}
\label{abproblem21}
\end{subequations}\vspace{-2mm}

By introducing a SINR target $\tau$, where $\tau \ge 0$, problem~\eqr{abproblem21} can be equivalently reformulated as follows:
\begin{subequations}
\begin{empheq}{align}
\max_{\mathbf{p},\tau}\quad & \tau\\
\textrm{s.t.} \quad & p_na_n\ge \tau\!\!\!\sum_{i=1,i\neq n}^Np_ib_{n,i}+\tau\sigma^2, \forall n \in\mathcal{N},\\\nonumber
& \text{(\ref{abproblem}g)}, \text{and (\ref{abproblem}h)}.
\end{empheq}
\label{abproblem22}
\end{subequations}\vspace{-2mm}\\
In problem \eqr{abproblem22}, constraint~(\ref{abproblem22}b) is obtained by rewriting $\gamma_n^\mathrm{A}(\mathbf{p}) \ge \tau$ into an equivalent inequality form. For any $\tau \ge \tau'$, the condition $\gamma_n^\mathrm{A}(\mathbf{p}) \ge \tau$ implies $\gamma_n^\mathrm{A}(\mathbf{p}) \ge \tau'$. As a result, the feasible set of problem~\eqr{abproblem22} exhibits a monotonic structure.

By defining a non-negative matrix $\mathbf{F}\in \mathbb{R}_+^{N\times N}$ and a vector
$\mathbf{u}\in\mathbb{R}_+^{N\times 1}$ as 
\begin{equation}
[\mathbf{F}]_{n,i}=
\begin{cases}
\frac{b_{n,i}}{a_n}, & i\neq n,\\
0, & i=n,
\end{cases}
\end{equation}
and 
\begin{equation}
[\mathbf{u}]_n=\frac{\sigma^2}{a_n},
\end{equation}
problem~\eqr{abproblem22} can be further reformulated as follows:
\begin{subequations}
\begin{empheq}{align}
\max_{\mathbf{p},\tau}\quad & \tau\\
\textrm{s.t.} \quad & \mathbf{p}\succeq\tau\mathbf{F}\mathbf{p}+\tau\mathbf{u},\\
& \mathbf{p}\succeq 0,\\
& \mathbf{1}^{\mathsf T}\mathbf{p}\le 1.
\end{empheq}
\label{abproblem23}
\end{subequations}\vspace{-2mm}\\
Constraint~(\ref{abproblem23}b) implies that, in order to satisfy the target SINR $\tau$, the power allocated to each user must be at least a term proportional to the aggregate interference, i.e., $\tau\mathbf{F}\mathbf{p}$, plus an additional noise related term given by $\tau\mathbf{u}$.

If $\mathbf{I}-\tau\mathbf{F}$ is invertible and its inverse is non-negative, the minimum power allocation coefficients required to satisfy constraint~(\ref{abproblem23}b) can be expressed as follows:
\begin{equation}
\mathbf{p}_\mathrm{min}(\tau)=\tau(\mathbf{I}-\tau\mathbf{F})^{-1}\mathbf{u}.
\end{equation}
It is noted that the above expression is valid when $\rho(\tau\mathbf{F})<1$, where $\rho(\cdot)$ denotes the spectral radius. By incorporating constraint~(\ref{abproblem23}d), the SINR target $\tau$ is feasible if and only if
\begin{equation}
\mathbf{1}^{\mathsf T}\mathbf{p}_{\min}(\tau)\le 1.
\end{equation}
Based on the monotonic feasibility, a bisection based algorithm is proposed in Algorithm~\ref{bisection_power} to obtain the optimal solution.

\begin{algorithm}[t]
\caption{Bisection based Algorithm for Solving \eqr{abproblem23}}
\label{bisection_power}
\begin{algorithmic}[1]
\STATE Construct $\mathbf{F}$ and $\mathbf{u}$ with $a_n$ and $b_{n,i}$.
\STATE Set $\tau_\mathrm{min}$, $\tau_\mathrm{max}$, and $\epsilon>0$.
\WHILE{$\tau_\mathrm{max}-\tau_\mathrm{min}>\epsilon$}
\STATE $\tau=(\tau_\mathrm{min}+\tau_\mathrm{max})/2$.
\STATE Solve $(\mathbf{I}-\tau\mathbf{F})\mathbf{x}=\mathbf{u}$ for $\mathbf{x}$.
\STATE $\mathbf{p}=\tau \mathbf{x}$.
\IF{$\mathbf{p}\succeq \mathbf{0}$ \textbf{and} $\mathbf{1}^{\mathsf T}\mathbf{p}\le 1$}
\STATE $\tau_\mathrm{min}=\tau$, $\mathbf{p}^\star=\mathbf{p}$ .
\ELSE
\STATE $\tau_\mathrm{max}=\tau$.
\ENDIF
\ENDWHILE
\end{algorithmic}
\end{algorithm}

Since the feasibility is monotonic in $\tau$, the set of feasible SINR targets forms an interval $[0,\tau^\star]$, where $\tau^\star$ denotes the optimal value. Specifically, for any $\tau<\tau^\star$, the SINR target is feasible and yields a valid power allocation $\mathbf{p}_{\min}(\tau)$, whereas for any $\tau>\tau^\star$, the corresponding problem becomes infeasible. As a result, $\mathbf{p}_{\min}(\tau^\star)$ is feasible and satisfies $\min_{n}\gamma_n(\mathbf{p})\ge\tau^\star$. That is, by the definition of $\tau^\star$, no power allocation can achieve a larger minimum SINR. Therefore, $\mathbf{p}^\star=\mathbf{p}_{\min}(\tau^\star)$ is globally optimal for problem~\eqr{abproblem2}.
\section{Solution for Hybrid Beamforming}
In this section, the hybrid beamforming design is investigated by decomposing problem~\eqr{hbproblem} into two subproblems, where each subproblem is solved with the other being fixed. The slot activation subproblem is formulated as
\begin{subequations}
\begin{empheq}{align}
\max_{\boldsymbol{\alpha}}\quad & \min\{R_n^\mathrm{H}| n\in\mathcal{N}\}\\\nonumber
\textrm{s.t.} \quad & \text{(\ref{hbproblem}b)}, \text{and (\ref{hbproblem}c)},
\end{empheq}
\label{hbproblem1}
\end{subequations}\vspace{-2mm}\\
and the corresponding power allocation subproblem is
\begin{subequations}
\begin{empheq}{align}
\max_{\mathbf{p}}\quad & \min\{R_n^\mathrm{H}| n\in\mathcal{N}\}\\\nonumber
\textrm{s.t.} \quad & \text{(\ref{hbproblem}d)}, \text{and (\ref{hbproblem}e)}.
\end{empheq}
\label{hbproblem2}
\end{subequations}\vspace{-2mm}\\
The solution approaches for the above two subproblems are presented in the following subsections.
\subsection{Slot Activation via Coalitional Games}
In the hybrid beamforming scenario, the inter-user interference is completely eliminated by the ZF digital precoding. As a result, the slot activation problem in \eqr{hbproblem1} can be reformulated as an equivalent max-min effective channel gain optimization, which facilitates the subsequent coalitional game based design.

For a given slot activation $\boldsymbol{\alpha}$, the effective channel matrix $\mathbf{H}$ is determined according to \eqr{effchannel}, \eqr{channelvec}, and \eqr{channelmatrix}. Under the ZF precoding, the conditions $\mathbf{h}_n\mathbf{w}_n = 1/\|\widetilde{\mathbf{w}}_n\|,\forall n$, and $\mathbf{h}_n\mathbf{w}_i = 0,\forall i\neq n$, are satisfied. Accordingly, the received SINR of user $n$ in \eqr{hbsinr} can be rewritten as
\begin{equation}
\gamma_n^\mathrm{H}=\frac{p_n P_t}{\sigma^2}|\mathbf{h}_n\mathbf{w}_n|^2=\frac{p_n P_t}{\sigma^2}\frac{1}{\|\widetilde{\mathbf{w}}_n\|^2}.
\end{equation}
Moreover, it holds that $\|\widetilde{\mathbf{w}}_n\|^2=\big[(\mathbf{H}\mathbf{H}^\mathsf{H})^{-1}\big]_{n,n}$. Therefore, the effective channel gain of user $n$ can be defined as
\begin{equation}
\zeta_n(\boldsymbol{\alpha})\triangleq |\mathbf{h}_n\mathbf{w}_n|^2=\frac{1}{\|\widetilde{\mathbf{w}}_n\|^2}=\frac{1}{\left[(\mathbf{H}\mathbf{H}^\mathsf{H})^{-1}\right]_{n,n}}.
\end{equation}
Accordingly, the achievable data rate of user $n$ is given by
\begin{equation}
R_n^\mathrm{H}=\log_2\!\left(1+\frac{p_n P_t}{\sigma^2}\zeta_n(\boldsymbol{\alpha})\right).
\end{equation}

Since $\log_2(1+x)$ is strictly increasing for $x\ge 0$, maximizing the minimum achievable rate is equivalent to maximizing the minimum SINR, which further reduces to maximizing the minimum weighted effective channel gain. Accordingly, problem~\eqr{hbproblem1} can be equivalently reformulated as follows:
\begin{subequations}
\begin{empheq}{align}
\max_{\boldsymbol{\alpha}}\quad & \min\{p_n\zeta_n(\boldsymbol{\alpha})|n\in\mathcal{N}\}\\\nonumber
\textrm{s.t.} \quad & \text{(\ref{hbproblem}b)}, \text{and (\ref{hbproblem}c)}.
\end{empheq}
\label{hbproblem11}
\end{subequations}\vspace{-2mm}\\
When solving problem~\eqr{hbproblem1}, the power allocation vector $\mathbf{p}$ is treated as fixed. Hence, problem~\eqr{hbproblem11} reduces to an unweighted max-min effective channel gain optimization, given by
\begin{subequations}
\begin{empheq}{align}
\max_{\boldsymbol{\alpha}}\quad & \min\{\zeta_n(\boldsymbol{\alpha})|n\in\mathcal{N}\}\\\nonumber
\textrm{s.t.} \quad & \text{(\ref{hbproblem}b)}, \text{and (\ref{hbproblem}c)}.
\end{empheq}
\label{hbproblem12}
\end{subequations}\vspace{-2mm}

As indicated by problem~\eqr{hbproblem12}, under the hybrid beamforming architecture, all activated slots jointly determine the effective channel matrix and the resulting ZF precoder. Therefore, the slot activation problem can be modeled as a coalitional formation game, in which all activated slots cooperate by forming a single coalition. The set of players consists of all radiation slots deployed on all cables, given by
\begin{equation}
\mathcal{V} = \{(k,m)| k\in\mathcal{K}, m\in\mathcal{M}_k\}.
\end{equation}
The coalition is defined as a subset $\mathcal{S}\subseteq\mathcal{V}$, which represents the set of activated slots. This coalition uniquely determines the slot activation vector $\boldsymbol{\alpha}$ and satisfies constraint~(\ref{hbproblem}b), where $\alpha_{k,m}=1$ if $(k,m)\in\mathcal{S}$ and $\alpha_{k,m}=0$ otherwise. Moreover, according to constraint~(\ref{hbproblem}c), at least one slot must be activated on each cable, which leads to the following condition:
\begin{equation}
|\mathcal{S}\cap \mathcal{M}_k|\ge 1,\quad \forall k\in\mathcal{K}.
\end{equation}
Based on problem~\eqr{hbproblem12}, the coalition value is defined as the minimum effective channel gain among all users, i.e.,
\begin{equation}
v(\mathcal{S}) \triangleq \min\{\zeta_n(\mathcal{S})\,|\,n\in\mathcal{N}\},
\end{equation}
where $\zeta_n(\mathcal{S})$ is the effective channel gain of user $n$ under coalition $\mathcal{S}$.

In the coalitional game, each slot adopts a merge-and-split strategy \cite{han2012game, kaidi2024nfc}, which is described as follows.
\begin{definition}
Given a player $(k,m)\in\mathcal{V}$ and a coalition $\mathcal{S}$, the following merge-and-split operations are defined.
\begin{enumerate}[leftmargin=*]
\item \textbf{Merge Rule}: Any player $(k,m)\notin\mathcal{S}$ is allowed to merge with the coalition $\mathcal{S}$ to form a new coalition
\begin{equation}
\mathcal{S}'=\mathcal{S}\cup\{(k,m)\},
\end{equation}
if and only if $v(\mathcal{S}')>v(\mathcal{S})$.
\item \textbf{Split Rule}: Any player $(k,m)\in\mathcal{S}$ is allowed to split from the coalition $\mathcal{S}$, resulting in a new coalition
\begin{equation}
\mathcal{S}'=\mathcal{S}\backslash\{(k,m)\},
\end{equation}
if and only if $v(\mathcal{S}')>v(\mathcal{S})$ and $|\mathcal{S}\cap\mathcal{M}_k|>1$.
\end{enumerate}
\end{definition}

Based on the proposed merge-and-split rules, the slot activation procedure is summarized in Algorithm~\ref{cgalg2}.
\begin{algorithm}[t]
\caption{Slot Activation Algorithm}
\label{cgalg2}
\begin{algorithmic}[1]
\STATE \textbf{Initialization:}
\STATE Activate the nearest slot on each cable to obtain $\mathcal{S}$.
\STATE Set $\delta\in\{0,1\}$ as an indicator variable.
\STATE \textbf{Main Loop:}
\FOR{$(k,m)\in\mathcal{V}$}
\STATE $\delta\leftarrow 0$.
\IF{$(k,m)\notin\mathcal{S}$}
\STATE $\mathcal{S}'=\mathcal{S}\cup\{(k,m)\}$.
\STATE $\delta\leftarrow 1$.
\ELSIF{$|\mathcal{S}\cap\mathcal{M}_k|>1$}
\STATE $\mathcal{S}'=\mathcal{S}\backslash\{(k,m)\}$.
\STATE $\delta\leftarrow 1$.
\ENDIF
\IF{$\delta=1$ \textbf{and} $v(\mathcal{S})<v(\mathcal{S}')$}
\STATE $\mathcal{S}=\mathcal{S}'$.
\ENDIF
\ENDFOR
\end{algorithmic}
\end{algorithm}

The slot activation process follows an iterative coalition formation procedure based on local improvement. Starting from an initial feasible coalition, slots sequentially attempt merge or split operations. Each candidate update is accepted only if it leads to a strict increase in the coalition value. The procedure terminates when no further improving merge or split operation exists.

The number of feasible coalitions is finite, since each cable contains a finite number of non-empty slot subsets. Moreover, the coalition value strictly increases at every accepted update and is upper bounded due to the finite transmit power and channel gains. Therefore, the proposed coalition formation process is guaranteed to converge in a finite number of iterations to a locally stable coalition structure.

Algorithm~\ref{cgalg2} performs slot activation updates by sequentially scanning all available slots. In the worst case, each iteration involves at most one merge or split attempt for every slot, resulting in $KM$ candidate updates per iteration. Let $T$ denote the number of iterations until convergence. The overall computational complexity of the proposed slot activation algorithm therefore scales as $\mathcal{O}(TKM)$. Under hybrid beamforming, the coalition value depends only on the effective channel gains, which eliminates the need for interference and rate evaluations and leads to a significantly reduced computational complexity compared with the analog scheme.
\subsection{Closed-Form Power Allocation}
For the considered hybrid beamforming scheme, the inter-user interference is completely eliminated by the ZF digital precoding. Accordingly, the achievable rate of user $n$ can be expressed as follows:
\begin{equation}
R_n^\mathrm{H}=\log_2(1+c_n p_n),
\end{equation}
where
\begin{equation}
c_n=\frac{P_t|\mathbf{h}_n\mathbf{w}_n|^2}{\sigma^2}.
\end{equation}
Therefore, for a given slot activation $\boldsymbol{\alpha}$, the power allocation subproblem in \eqr{hbproblem2} reduces to
\begin{subequations}
\begin{empheq}{align}
\max_{\mathbf{p}}\quad & \min\{\log_2(1+c_np_n)| n\in\mathcal{N}\}\\\nonumber
\textrm{s.t.} \quad & \text{(\ref{hbproblem}d)}, \text{and (\ref{hbproblem}e)},
\end{empheq}
\label{hbproblem21}
\end{subequations}\vspace{-2mm}

To solve the above problem, the following proposition is obtained.
\begin{proposition}
For any given slot activation result, the optimal solution to
problem~\eqr{hbproblem21} is given by
\begin{equation}\label{optip2}
p_n^\star=\frac{1/c_n}{\sum_{i=1}^N 1/c_i},\quad \forall n\in\mathcal{N}.
\end{equation}
\end{proposition}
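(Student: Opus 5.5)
The plan is to combine Lemma~\ref{lemma1} with the decoupled structure of the rates under ZF precoding, and then to close with a direct optimality argument. Since $c_n>0$ for every $n$, each rate $R_n^\mathrm{H}=\log_2(1+c_np_n)$ depends only on $p_n$ and is strictly increasing in it. Lemma~\ref{lemma1} therefore gives an optimal $\mathbf{p}^\star$ with equal rates. Because $\log_2(1+\cdot)$ is strictly increasing, equal rates are the same as $c_np_n^\star=t$ for a common value $t\ge 0$, that is, $p_n^\star=t/c_n$.

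Next I will show that the sum-power constraint~(\ref{hbproblem}e) is tight at the optimum. If $\sum_n p_n^\star<1$, all $p_n^\star$ can be scaled up by the same factor $1/\sum_n p_n^\star>1$. This keeps the allocation feasible and strictly increases every rate, hence the minimum, which is a contradiction. The case $t=0$ is excluded because it would give $\sum_np_n=0<1$. Imposing $\sum_n t/c_n=1$ gives $t=1/\sum_i 1/c_i$, and substituting back yields~\eqr{optip2}.

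To avoid depending on Lemma~\ref{lemma1}'s existence argument, I will also give a self-contained check. Let $t^\star=1/\sum_i 1/c_i$. Suppose some feasible $\mathbf{p}$ had $\min_n c_np_n>t^\star$. Then $p_n>t^\star/c_n$ for all $n$, so $\sum_n p_n>t^\star\sum_n 1/c_n=1$, which violates~(\ref{hbproblem}e). Hence $\min_n\log_2(1+c_np_n)\le\log_2(1+t^\star)$ for every feasible $\mathbf{p}$, and $\mathbf{p}^\star$ attains this bound. This proves optimality.

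The main point needing care is ensuring that $c_n>0$ for all $n$, so that $1/c_n$ is well defined. This follows from the invertibility of $\mathbf{H}\mathbf{H}^\mathsf{H}$, which is implicitly assumed by the ZF design, since $c_n=P_t/(\sigma^2[(\mathbf{H}\mathbf{H}^\mathsf{H})^{-1}]_{n,n})$ and the diagonal entries of a positive definite inverse are positive.
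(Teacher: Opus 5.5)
Your proposal is correct and follows essentially the paper's route: the paper also derives the feasibility condition $\sum_{n}\gamma/c_n\le 1$, giving $\gamma^\star=1/\sum_{i}1/c_i$, and then invokes Lemma~\ref{lemma1} to set $p_n^\star=\gamma^\star/c_n$. Your bound-and-attainment check is that same feasibility argument and makes the appeal to Lemma~\ref{lemma1} unnecessary. One more line would also give uniqueness, which both you and the paper leave implicit: any optimal $\mathbf{p}$ has $p_n\ge t^\star/c_n$ for all $n$, and together with $\sum_n p_n\le 1$ this forces equality.
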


\begin{IEEEproof}
Since $\log_2(\cdot)$ is strictly increasing, maximizing $\log_2(1+c_n p_n)$ is equivalent to maximizing the minimum SINR, which leads to the following equivalent problem:
\begin{subequations}
\begin{empheq}{align}
\max_{\mathbf{p}}\quad & \min\{c_np_n| n\in\mathcal{N}\}\\\nonumber
\textrm{s.t.} \quad & \text{(\ref{hbproblem}d)}, \text{and (\ref{hbproblem}e)}.
\end{empheq}
\label{hbproblem22}
\end{subequations}\vspace{-2mm}\\
By introducing a target minimum SINR $\gamma>0$, problem~\eqr{hbproblem22} can be equivalently reformulated as
\begin{subequations}
\begin{empheq}{align}
\max_{\mathbf{p}}\quad & \gamma\\
\textrm{s.t.} \quad & c_n p_n \ge \gamma, \forall n \in\mathcal{N}, \\\nonumber
& \text{(\ref{hbproblem}d)}, \text{and (\ref{hbproblem}e)}.
\end{empheq}
\label{hbproblem23}
\end{subequations}\vspace{-2mm}

From constraint~(\ref{hbproblem23}b), it follows that
\begin{equation}
p_n \ge \frac{\gamma}{c_n},\quad \forall n\in\mathcal{N}.
\label{plower}
\end{equation}
That is, a given SINR target $\gamma$ is feasible if and only if there exists a power allocation vector $\mathbf{p}$ satisfying (\ref{hbproblem}d), (\ref{hbproblem}e), and~\eqr{plower}. Since $\gamma>0$, constraint~(\ref{hbproblem}d) is always satisfied. Based on \eqr{plower}, the sum-power constraint~(\ref{hbproblem}e) results in
\begin{equation}
\sum_{n=1}^N \frac{\gamma}{c_n} \le 1,
\end{equation}
which can be equivalently expressed as
\begin{equation}
\gamma \le \frac{1}{\sum_{n=1}^N 1/c_n}.
\end{equation}
Therefore, the maximum feasible SINR is given by
\begin{equation}\label{optgamma}
\gamma^\star=\frac{1}{\sum_{n=1}^N 1/c_n}.
\end{equation}

According to Lemma~\ref{lemma1}, under the optimal power allocation $\mathbf{p}^\star$, all users achieve the same SINR. Hence, it follows that
\begin{equation}
p_n^\star=\frac{\gamma^\star}{c_n},\quad \forall n\in\mathcal{N}.
\end{equation}
Substituting \eqr{optgamma} into the above expression yields
\begin{equation}
p_n^\star=\frac{1/c_n}{\sum_{i=1}^N 1/c_i},\quad \forall n\in\mathcal{N},
\end{equation}
which completes the proof.
\end{IEEEproof}
\section{Simulation Results}
\begin{table}[t]
\centering
\caption{Simulation Parameters}\vspace{-1mm}
\label{parameter}
\begin{tabular}{lc}
\hline
\textbf{Parameter} & \textbf{Value} \\ \hline
Cable height ($d$) & $3$~m \\
Region size ($D_y$) & $20$~m \\
Relative permittivity ($\varepsilon_r$) & $1.26$ \\
Number of scatterers ($L$) & $10$ \\
Carrier frequency ($f_c$) & $3.5$~GHz\\
Scattering path gain ($\delta_\ell$) & $\delta_\ell \sim \mathcal{CN}(0,1)$ \\
Noise power ($\sigma^2$) & $-64$~dBm\\
Convergence tolerance ($\epsilon$) & $10^{-4}$\\ \hline
\end{tabular}
\end{table}

In this section, Monte Carlo simulations are conducted to evaluate both the system performance and the effectiveness of the proposed algorithms for the LCX based generalized pinching-antenna system under analog and hybrid beamforming. In the simulation, users are uniformly distributed in the $D_x\times D_y$ area, and the $L$ scatterers are randomly placed on the boundary (walls) with heights uniformly distributed in $[0,3]$~m. Three benchmarks are considered: i) a conventional fixed-antenna system without precoding, where the BS transmits the superposed signals of all users with equal power from $2K$ antennas deployed at the center of the service area at height $3$~m and spacing $\lambda/2$; ii) a conventional fixed-antenna MISO system with ZF beamforming, using the same antenna deployment; and iii) a single-port LCX feeding scheme in which transmission is performed solely through port~$\mathrm{L}$. The simulation parameters are listed in Table~\ref{parameter}.

\begin{figure}[!t]
\centering{
\subfigure[User 1 as the intended user]{\centering{\hspace{-4mm}\includegraphics[width=100mm]{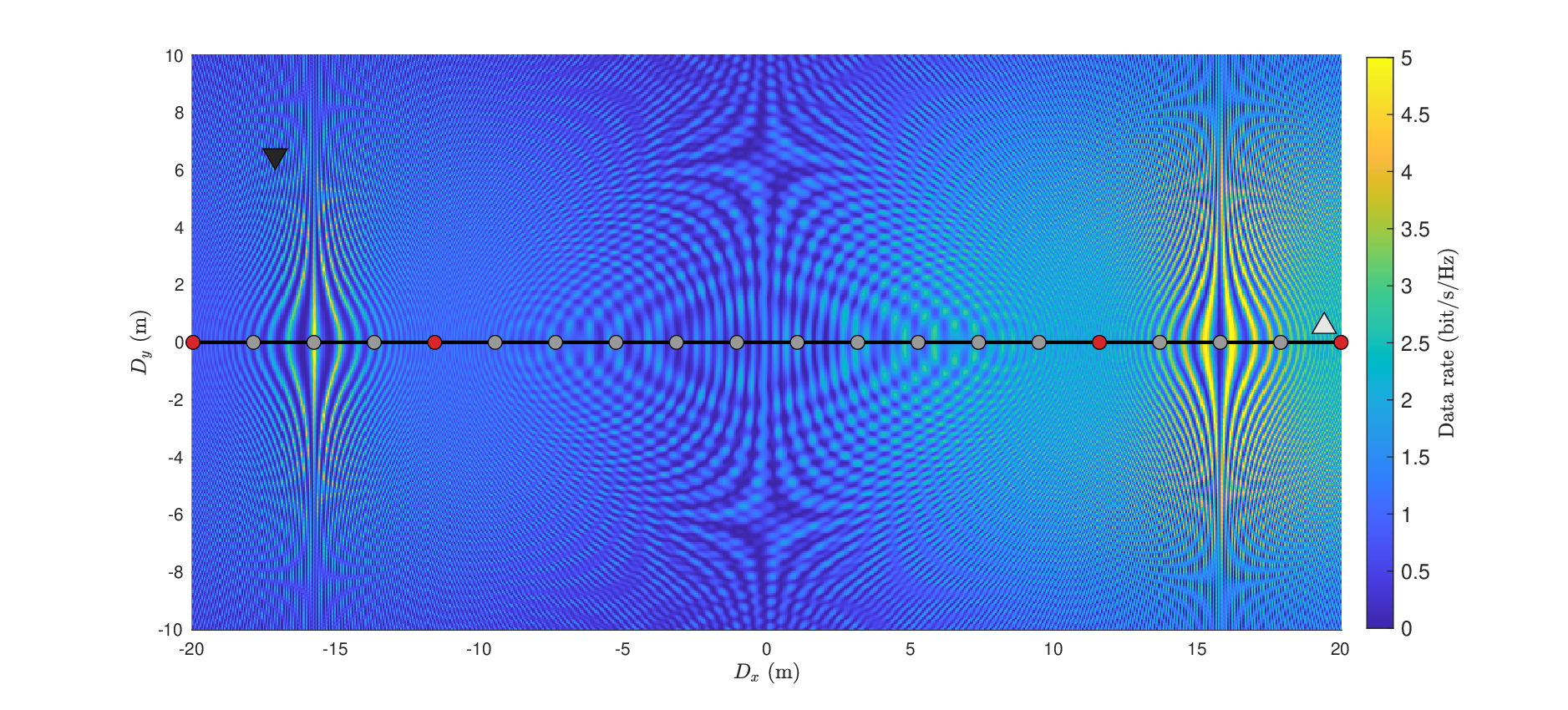}}}\vspace{-2mm}
\subfigure[User 2 as the intended user]{\centering{\hspace{-4mm}\includegraphics[width=100mm]{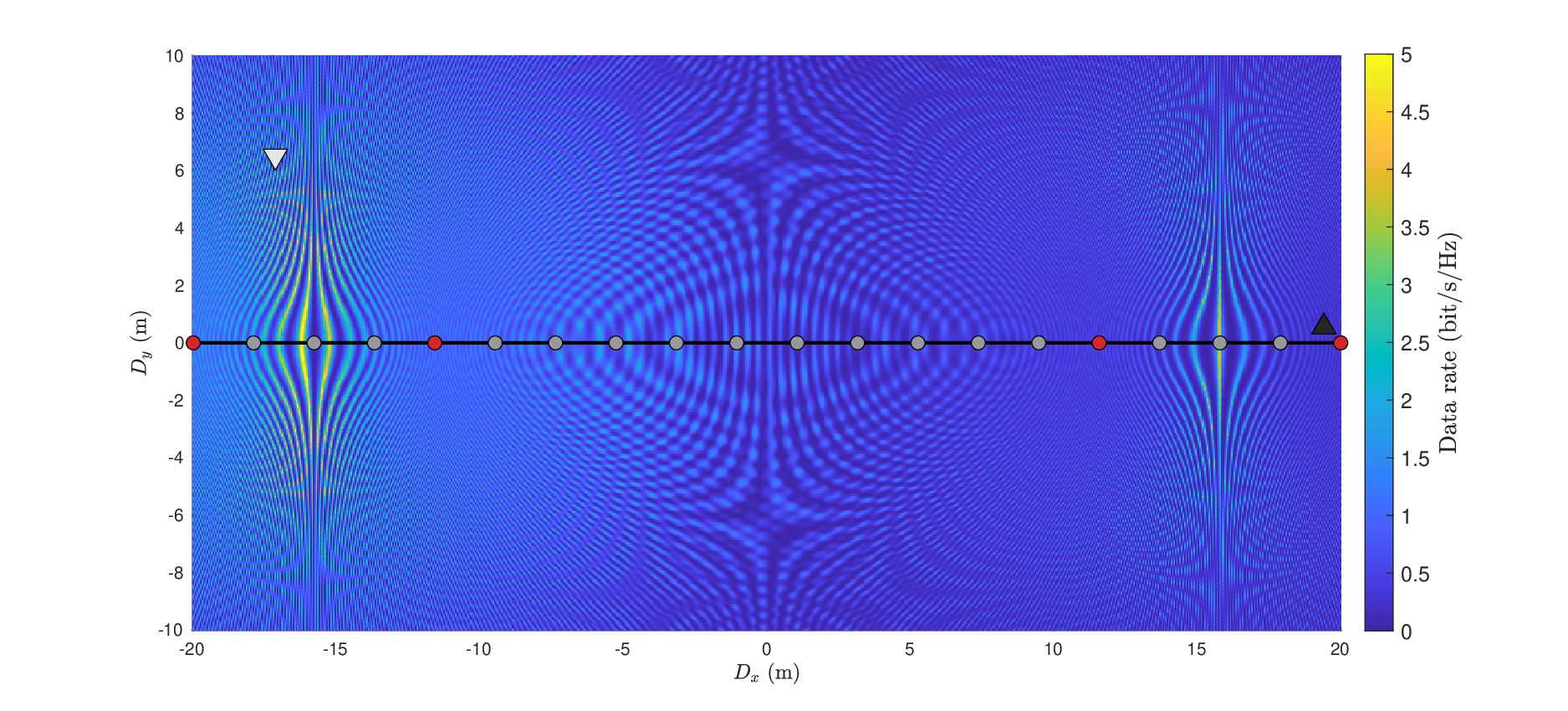}}}}\vspace{-2mm}
\caption{The achievable data rate distribution in the LCX based generalized pinching-antenna system with analog beamforming, where where $\kappa=0.1$~dB/m, $D_x=40$~m, $K=1$, $N=2$, $M=20$, $L=10$, and $P_t=20$~dBm.}\vspace{-3mm}
\label{result01}
\end{figure}

\begin{figure}[!t]
\centering{
\subfigure[User 1 as the intended user]{\centering{\hspace{-4mm}\includegraphics[width=100mm]{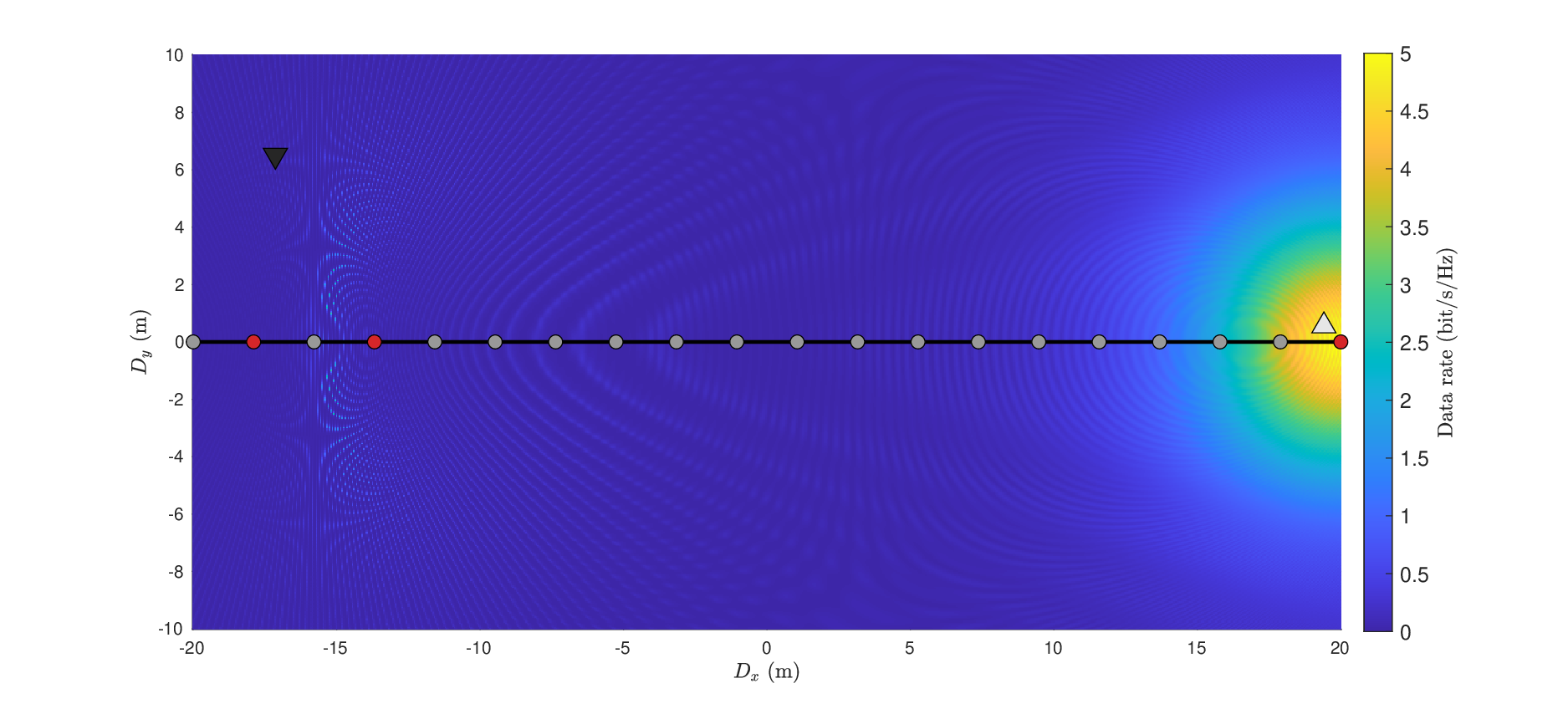}}}\vspace{-2mm}
\subfigure[User 2 as the intended user]{\centering{\hspace{-4mm}\includegraphics[width=100mm]{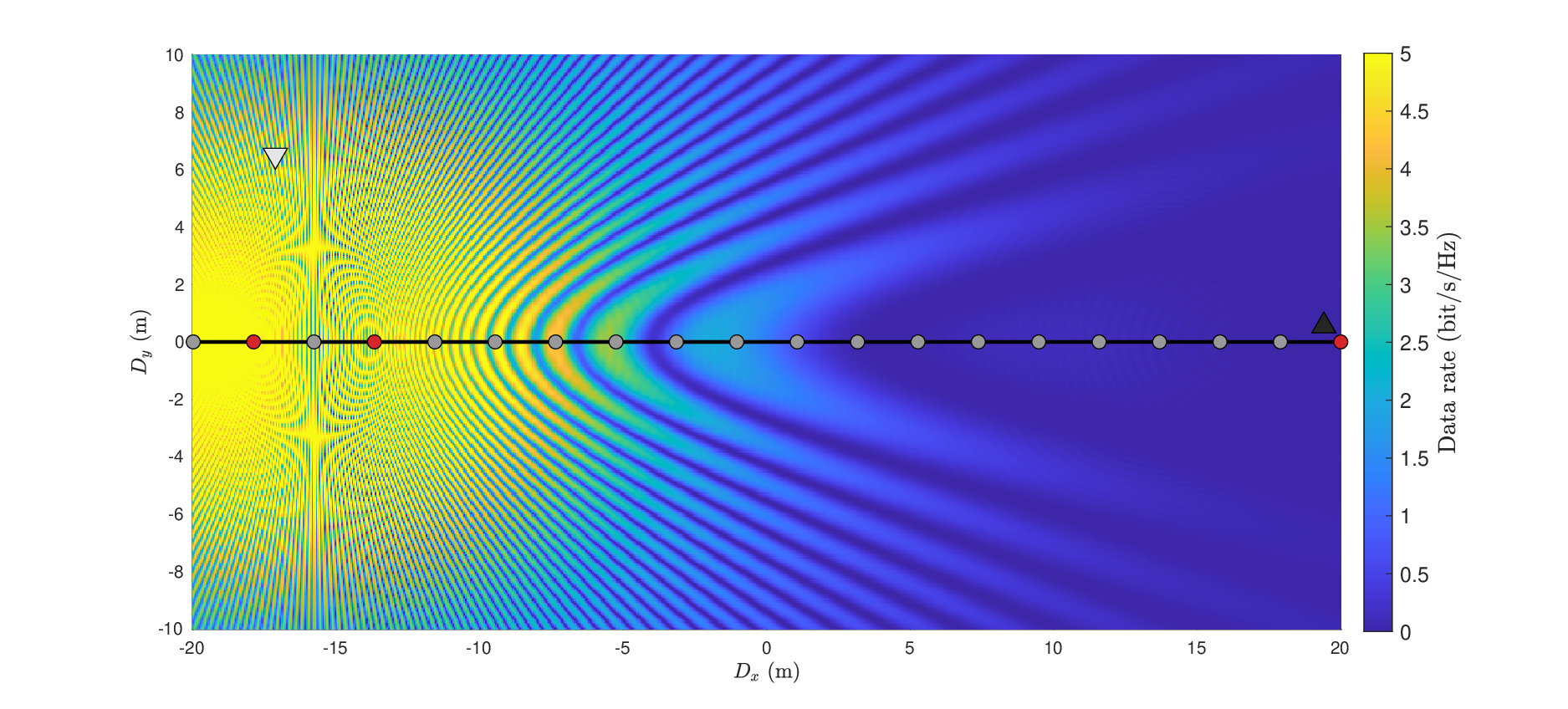}}}}\vspace{-2mm}
\caption{The achievable data rate distribution in the LCX based generalized pinching-antenna system with hybrid beamforming, where $\kappa=0.1$~dB/m, $D_x=40$~m, $K=1$, $N=2$, $M=20$, $L=10$, and $P_t=20$~dBm.}\vspace{-3mm}
\label{result02}
\end{figure}

Figs.~\ref{result01} and \ref{result02} illustrate the optimized results and the corresponding achievable data rate distributions under analog and hybrid beamforming, respectively, for the same user deployment scenario. The upward and downward triangle markers indicate the locations of User 1 and User 2, respectively, while the red and grey dots denote the activated and deactivated radiation slots. In the analog beamforming scheme shown in \fref{result01}, a relatively large number of radiation slots are activated, as pinching based beamforming relies on the selective activation of slots with different propagation induced phase shifts. This mechanism provides flexible phase control but also makes the performance highly sensitive to user locations, resulting in noticeable spatial variations in the data rate distribution. Moreover, the intrinsic attenuation of the cable leads to asymmetric signal strengths across the service region, thereby forming distinct coverage characteristics on the left and right sides of the cable. In contrast, the hybrid beamforming results in \fref{result02} exhibit a much smoother rate distribution due to the inclusion of ZF digital precoding, which effectively suppresses multi-user interference. As a result, when a user is located close to the cable, activating only the nearest slot is generally sufficient to ensure reliable transmission, whereas for users positioned farther away, multiple slots are activated to jointly enhance the received signal power through analog beamforming, compensating for the increased path loss.

\begin{figure}[!t]
\centering{
\subfigure[$K=1$ and $N=2$]{\centering{\includegraphics[width=84mm]{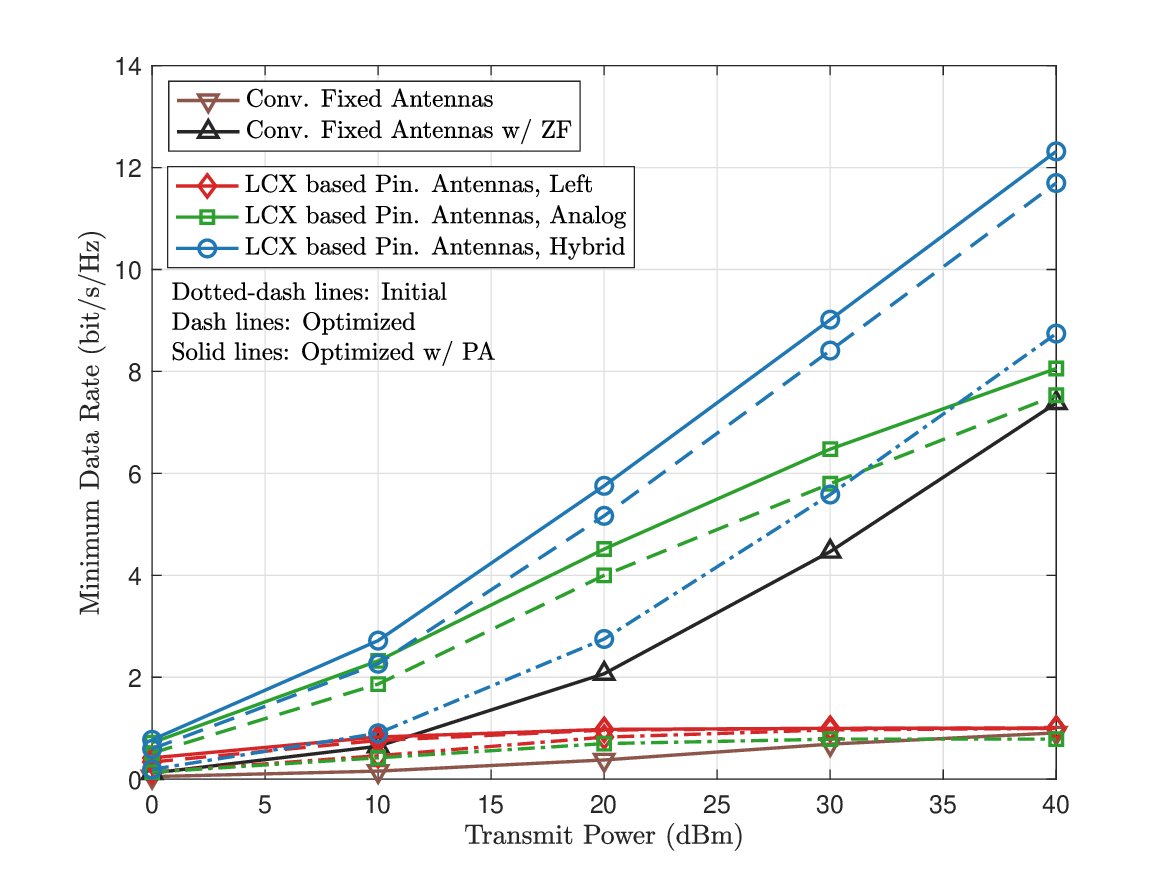}}}\vspace{-2mm}
\subfigure[$K=2$ and $N=4$]{\centering{\includegraphics[width=84mm]{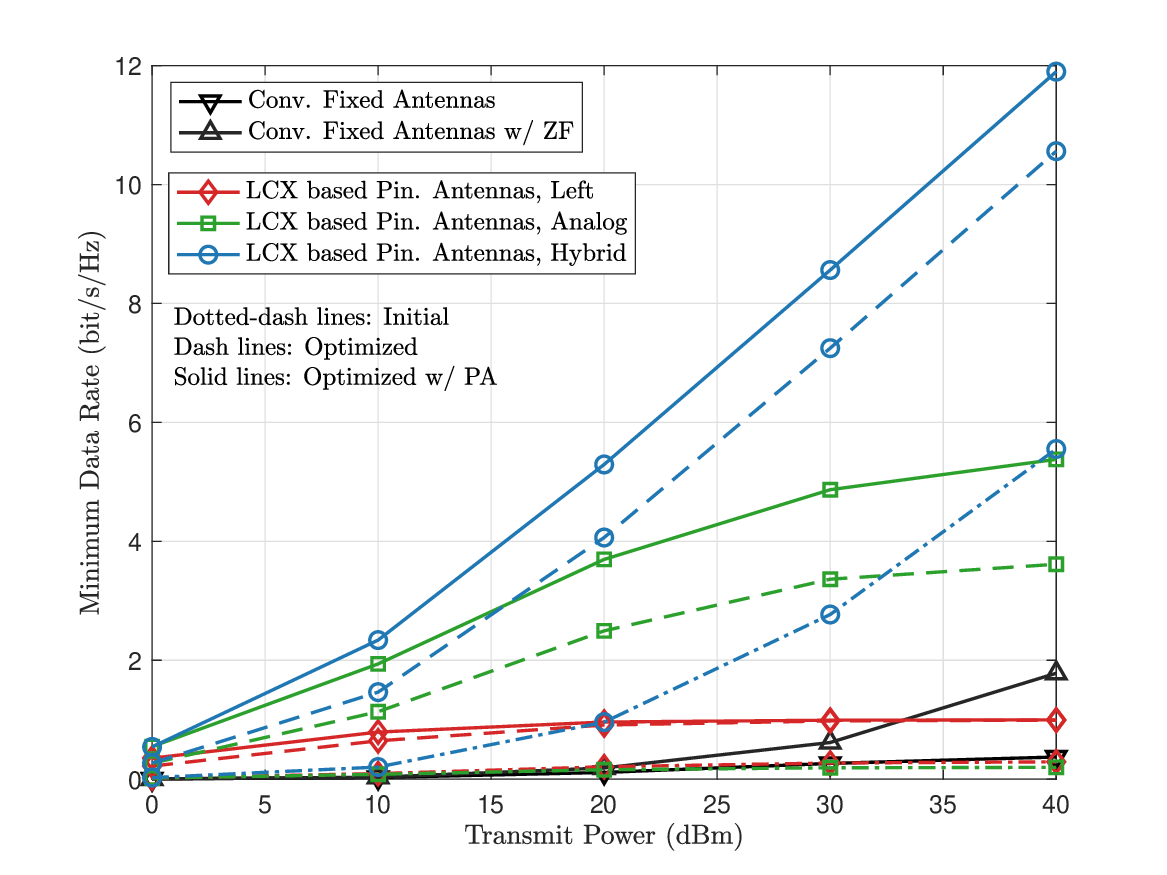}}}}\vspace{-2mm}
\caption{The impact of the transmit power, where $\kappa=0.1$~dB/m, $D_x=50$~m, and $M=50$.}\vspace{-4mm}
\label{result1}
\end{figure}

Figs.~\ref{result1}(a) and \ref{result1}(b) demonstrate the impact of the transmit power on the system performance under the single-cable and multi-cable scenarios, respectively. It can be observed that the proposed analog and hybrid beamforming schemes consistently outperform the benchmark methods, and the performance advantage becomes more significant in the multi-cable deployment. In the analog beamforming scheme, the performance is mainly determined by port selection and slot activation, which jointly determine the spatial distribution of the radiated signals. However, due to the presence of residual multi-user interference, the achievable data rate gradually saturates as the transmit power increases. In contrast, by suppressing inter-user interference through ZF digital precoding, the hybrid beamforming approach exhibits a markedly different trend. Specifically, when the transmit power exceeds $10$~dBm, the achievable data rate increases almost linearly with the transmit power, leading to a rapidly widening performance gap compared with analog beamforming, particularly in the multi-cable scenario with a larger number of served users.

\begin{figure}[!t]
\centering{\includegraphics[width=84mm]{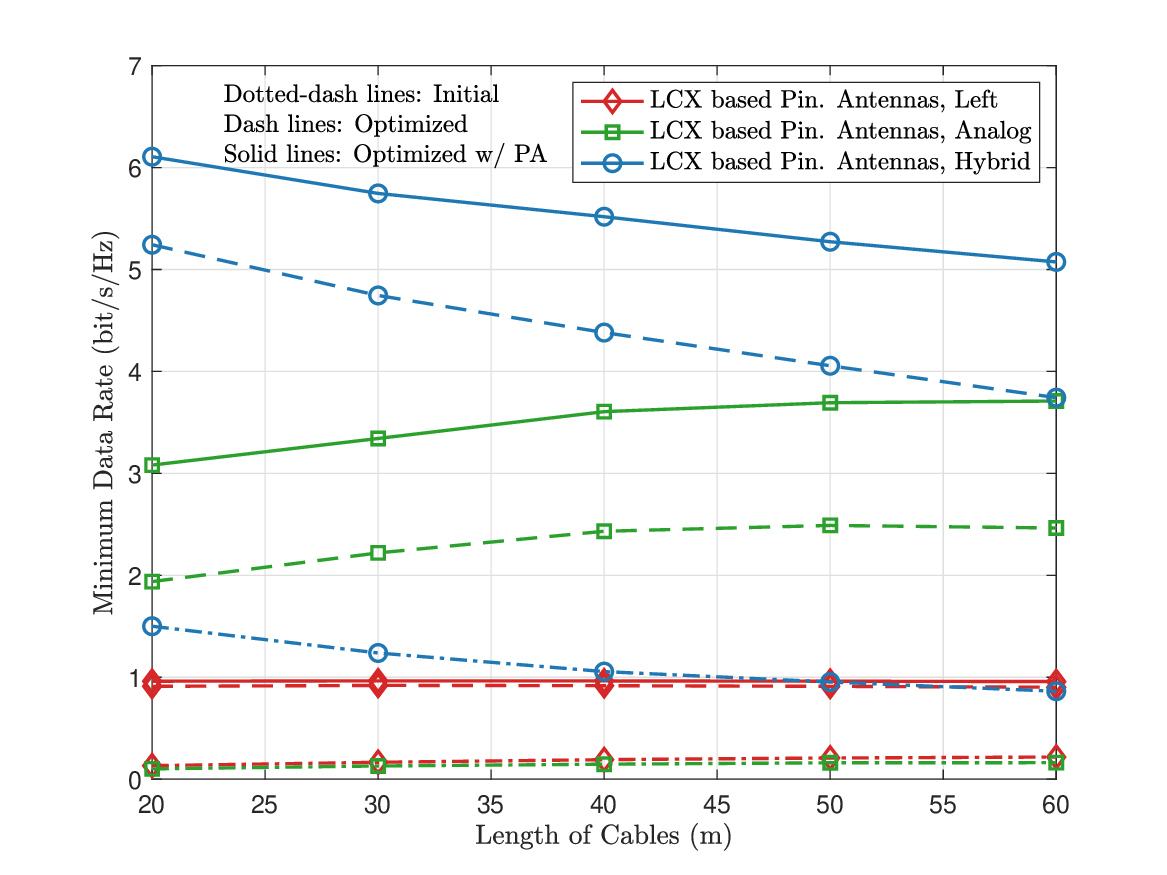}}
\caption{Impact of the length of cables, where  $\kappa=0.1$~dB/m, $K=2$, $N=4$, $\Delta d=1$~m, and $P_t=20$~dBm.}\vspace{-4mm}
\label{result2}
\end{figure}

\fref{result2} presents the impact of the cable length on the system performance, which also corresponds to an increase in the coverage region size. The slot spacing is fixed, and therefore increasing the cable length results in a larger number of radiation slots. It can be observed that the performance of the single-port feeding scheme remains nearly unchanged as the cable length increases, since both the desired and interfering components experience stronger intra-cable attenuation over longer propagation distances, which largely offsets the impact of additional distant slots on the resulting SINR. In contrast, the analog beamforming scheme benefits from the increased slot diversity, which provides additional degrees of freedom for pinching based beamforming and leads to improved performance. On the other hand, the hybrid beamforming scheme exhibits an opposite trend, as the accumulation of attenuation over distant slots degrades the channel conditions and reduces the efficiency of digital precoding. This observation is consistent with Remark~1 and further confirms the critical impact of propagation distance and attenuation in LCX based systems.

\begin{figure}[!t]
\centering{\includegraphics[width=84mm]{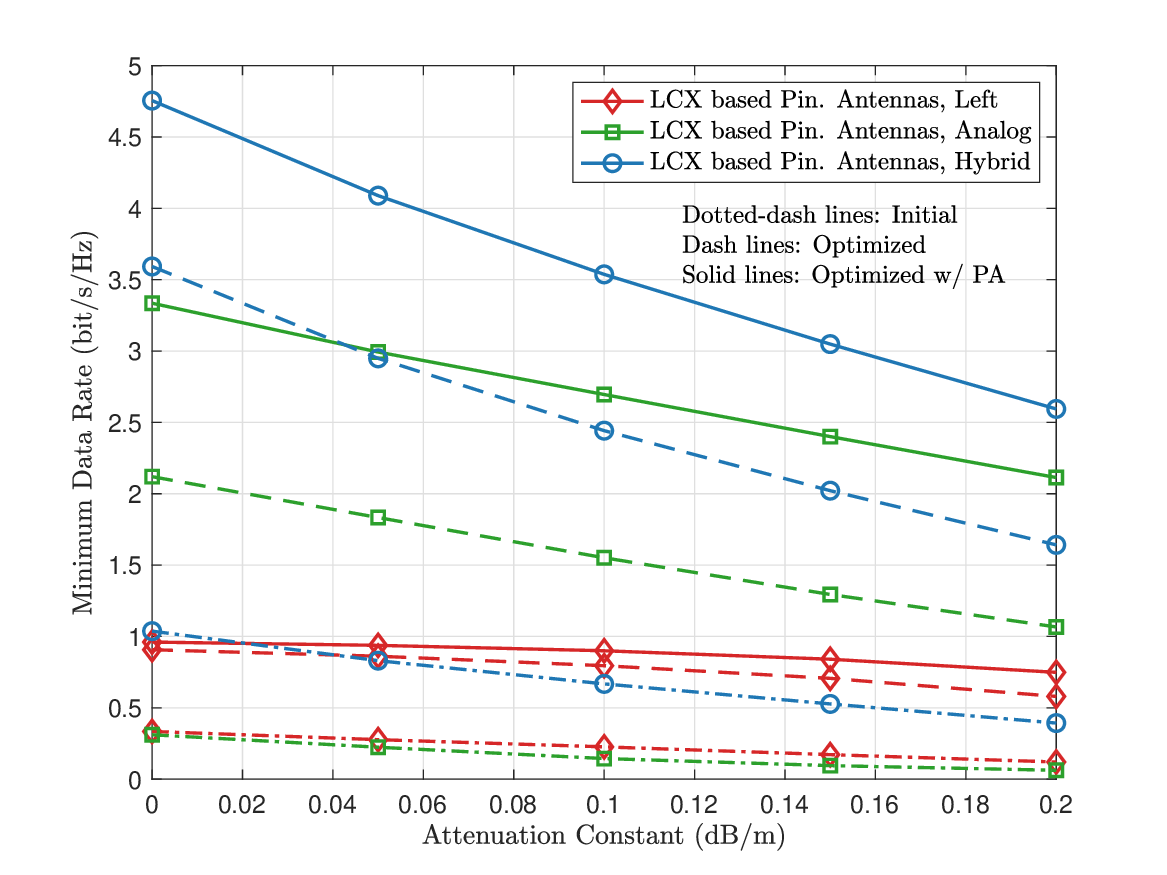}}
\caption{Impact of the attenuation constant, where $D_x=100$~m, $K=2$, $N=4$, $M=50$, and $P_t=20$~dBm.}\vspace{-4mm}
\label{result3}
\end{figure}

\fref{result3} illustrates the impact of the attenuation constant on the minimum achievable rate. As the attenuation constant increases, the performance of all considered schemes degrades due to stronger intra-cable power loss. It can be observed, however, that the proposed analog beamforming scheme exhibits a noticeably slower performance degradation. This is mainly attributed to port selection, where the desired signal of each user is transmitted through a single selected port, while interfering components typically propagate over longer cable distances and thus experience stronger attenuation. In contrast, the hybrid beamforming and benchmark schemes are more sensitive to attenuation, resulting in a steeper decline in the achievable minimum rate.

\begin{figure}[!t]
\centering{\includegraphics[width=84mm]{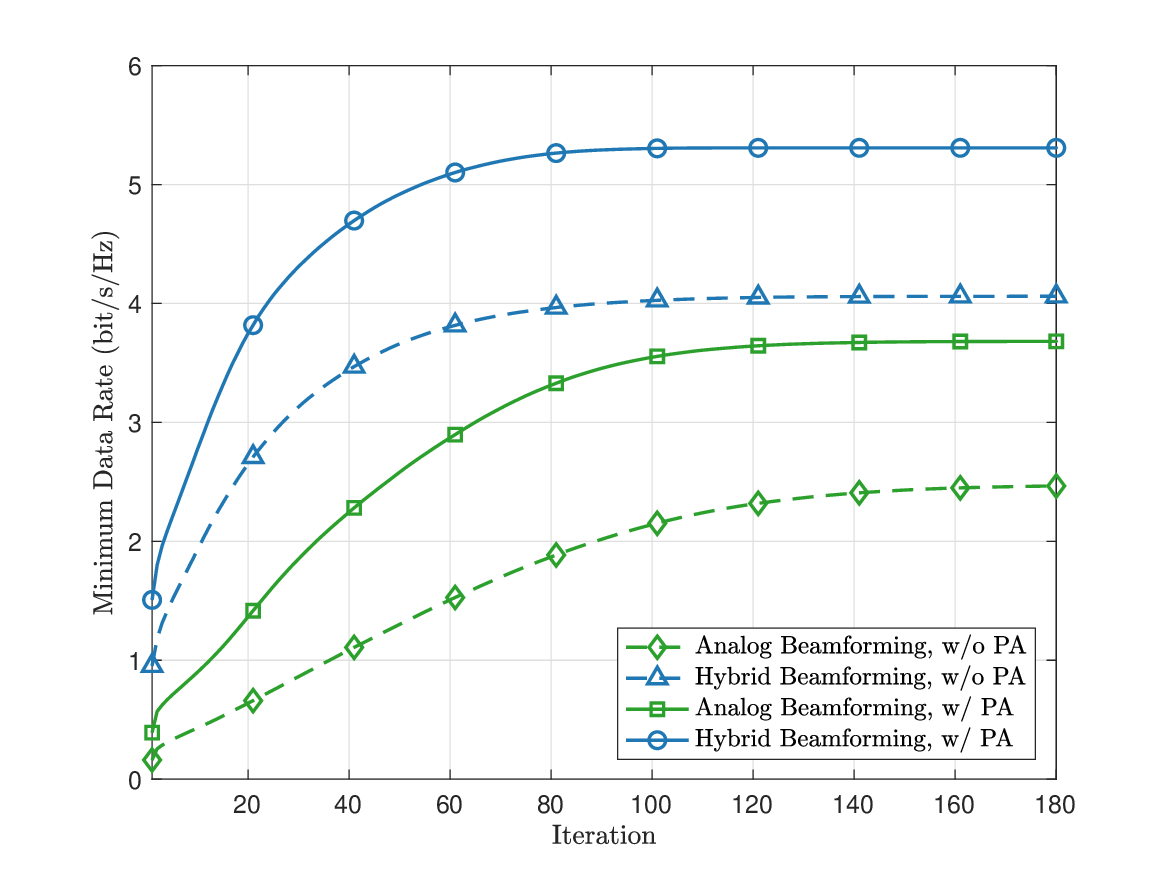}}
\caption{Convergence performance of the proposed algorithms, where $\kappa=0.1$~dB/m, $D_x=50$~m, $K=2$, $N=4$, $M=50$, and $P_t=20$~dBm.}\vspace{-4mm}
\label{result4}
\end{figure}
 
 \fref{result4} shows the convergence performance of Algorithm~\ref{cgalg} and Algorithm~\ref{cgalg2} for the analog and hybrid beamforming schemes, respectively. It can be observed that both algorithms converge to stable solutions within a limited number of iterations, which validates the convergence and complexity analysis discussed in the previous sections. The analog beamforming algorithm converges more slowly due to the additional port selection process, which enlarges the search space. Moreover, incorporating power allocation significantly accelerates convergence and leads to higher achievable minimum rates, demonstrating the effectiveness of joint beamforming and power allocation optimization.
\section{Conclusions}
This paper investigated an LCX based generalized pinching-antenna system with dual-port feeding. By enabling bidirectional signal injection along the cable, the proposed architecture increases the available spatial degrees of freedom, thereby facilitating more flexible and effective beamforming designs. Both analog and hybrid beamforming frameworks were considered under the proposed architecture, where the minimum data rate was maximized through the joint design of port selection, slot activation, and power allocation. The resulting optimization problems were efficiently addressed using matching based port selection, coalitional game based slot activation, and low-complexity power control methods. Simulation results confirm the advantages of dual-port feeding and further demonstrate the effectiveness of the proposed optimization methods under different transmit power levels and physical cable parameters.
\bibliographystyle{IEEEtran}
\bibliography{KaidisBib}
\end{document}